\newtheorem{theorem}{Theorem}
\newtheorem{definition}{Definition}
\newtheorem{remark}{Remark}
\newtheorem{proposition}{Proposition}
\newtheorem{assumption}{Assumption}
\begin{document}
\title{Peer-to-Peer Energy Markets With Uniform Pricing: A Dynamic Operating Envelope Approach}
\author{Zeinab Salehi, Yijun Chen, Ian R. Petersen, Guodong Shi, Duncan S. Callaway, and Elizabeth L. Ratnam
\thanks{Zeinab Salehi and Ian R. Petersen are with the School of Engineering, Australian National University, Canberra, ACT 2601, Australia (e-mail:  zeinab.salehi@anu.edu.au; ian.petersen@anu.edu.au).}

\thanks{Yijun Chen is with the Department of Electrical and Electronic Engineering,  University of Melbourne,  VIC 3052, Australia (e-mail:  yijun.chen.1@unimelb.edu.au).}

\thanks{Guodong Shi is with the  Australian Centre for Robotics, School of Aerospace, Mechanical and Mechatronic Engineering, The University of Sydney, NSW 2050, Australia (e-mail: guodong.shi@sydney.edu.au).}

\thanks{Duncan S. Callaway is with the Energy and Resources Group, University of California, Berkeley, CA 94720 USA (e-mail: dcal@berkeley.edu).}

\thanks{Elizabeth L. Ratnam is with the Department of Electrical and Computer Systems Engineering, Monash University, Melbourne, VIC 3168, Australia (e-mail:  liz.ratnam@monash.edu).}

}


\maketitle

\begin{abstract}
The recent widespread adoption of rooftop solar backed by battery storage is enabling energy customers to both produce and consume electricity (i.e., prosumers of electricity). To facilitate prosumer participation in the electric grid, new market mechanisms are required. In this paper, we design peer-to-peer energy markets where prosumers trade their excess energy with peers to gain profit while satisfying the overall balance in electricity supply and demand. We first consider a market structure, considering the case where voltage and/or thermal constraints are binding. When such grid constraints are binding, market clearing prices can vary across locations.  However, heterogeneous prices may be considered by regulators to lack fairness. To ensure uniform pricing, we design two peer-to-peer energy markets with dynamic operating envelopes (DOEs). DOEs enable us to decompose global voltage and thermal constraints across the power grid into local constraints for each prosumer, resulting in uniform prices across the grid. By means of numerical simulations on an IEEE 13-node feeder, we benchmark the proposed market-based approaches in the presence of binding voltage constraints.
\end{abstract}

\begin{IEEEkeywords}
Dynamic operating envelope (DOE), energy market, uniform pricing, binding grid constraints.
\end{IEEEkeywords}

\section{Introduction}
\label{sec:introduction}

\IEEEPARstart{T}{he} recent rapid proliferation of distributed energy resources, such as rooftop solar, battery storage, and electric vehicles, are transforming traditional electricity markets. In this new era of renewable energy, end-users are becoming active participants, known as prosumers, who not only consume energy but also produce energy and share it through electric distribution grids \cite{azim2023dynamic}. As electric distribution grids transition from a system with unidirectional power flows to a bidirectional energy system, innovative economic models that facilitate active prosumer participation are required.

In the recent literature, peer-to-peer (P2P) energy markets have been proposed to enable prosumers to profit from trading electricity \cite{nasiri2023moment, alfaverh2023dynamic, gao2023blockchain}. A P2P electricity market must maintain a real-time balance between supply and demand for reliable grid operations \cite{weron2014electricity}. 
It is expected that the price of electricity in a P2P market influences prosumer decisions regarding when to export electricity generation and when to increase electricity consumption. To balance supply and demand, the electricity price can be determined through the notion of a \textit{competitive equilibrium} in a competitive market \cite{mas1995microeconomic, nguyen2011walrasian, wu2023competitive, salehi2022finite}. 

In a competitive market, the impact of a (single) participant decision is typically small relative to the overall market --- meaning a prosumer decision to import or export electricity does not significantly influence market-based electricity prices \cite{mas1995microeconomic}. A competitive equilibrium is achieved when no participant has a financial incentive to unilaterally change their decision, and total demand matches total supply at every moment \cite{li2020transactive, salehi2023competitive, chen2022trilevel, salehi2022infinite}. Once electricity prices corresponding to a competitive equilibrium are communicated to each prosumer, each prosumer can seek to maximize their payoff independently. The payoff combines the prosumer income from transactions and their utility from consumption. 


Beyond balancing the supply and demand of electricity at all moments, electricity systems must also adhere to the physics-based limits of voltage and thermal constraints. When voltage or thermal constraints are binding, market participants can be exposed to different electricity prices based to their physical location in the electric grid, or otherwise termed,  \textit{locational prices} \cite{ meng2023transmission, li2015market, Li2021}. Locational pricing reflecting transmission grid constraints has been implemented in several countries, with the United States being the primary adopter \cite{litvinov2010design}. Recently, there has been growing interest in extending the concept of locational pricing to distribution grids to support market-based participation of distributed energy resources \cite{papavasiliou2017analysis, Morstyn2020, salehi2024acc}. However, locational price disparities have the potential to exacerbate  social inequities \cite{tangeras2021competitive, edmunds2017distributed, CitiPower}, influencing social acceptance and potentially driving customer dissatisfaction. Designing markets with uniform pricing for all participants, while ensuring grid constraints are satisfied, is a way to enhancing the social acceptance of market-based prices for prosumers in distribution grids.



In this paper, we design competitive energy markets with uniform electricity prices for electric distribution grids, accounting for power flow constraints such as voltage and thermal limits. We consider the concept of dynamic operating envelopes (DOEs)\footnote{A DOE defines a range of real and reactive power that each prosumer can inject into or draw from the electric grid without violating operational constraints \cite{mahmoodi2023capacity}.}, which enables us to decompose grid constraints into localized constraints that each prosumer manages independently. The main contributions of this paper are threefold.
 \begin{enumerate}
     \item We propose competitive markets with uniform pricing using the concept of  DOEs. We demonstrate that the resulting competitive equilibria maximize social welfare and vice versa. Social welfare is defined as the sum of all prosumer utilities as determined by a central coordinator in the absence of market-based prices.
     \item We show our proposed uniform pricing approach is strongly budget-balanced whereas locational pricing generates a budget surplus. That is, total payments collected by a market-based coordinator sum to zero with our uniform pricing approach.
     \item We formulate a relationship between the proposed uniform pricing market and the traditional competitive market with locational pricing.
 \end{enumerate}

The rest of the paper is organized as follows. Section \ref{sec:preliminaries} presents our preliminaries, including definitions of competitive markets with locational pricing and the concept of DOEs. In Section \ref{sec:competitive}, we design two competitive markets with uniform pricing based on DOEs. We then explore the connection between our proposed uniform pricing approaches and traditional locational pricing. Section \ref{sec:Numerical Results} provides numerical simulations, and Section \ref{sec:conclusion} concludes the paper.
 
\section{Preliminaries} \label{sec:preliminaries}

\subsection{Prosumer  Model}\label{section:prosumer_model}
Consider an electric distribution grid that is represented by a radial topology connecting $N$ prosumers. Each prosumer $i \in \mathcal{N}$, where $\mathcal{N}=\{1, 2, \cdots, N\}$, represents a building (or a group of buildings) with uncontrollable loads such as lights,  controllable loads such as EVs, and local electricity production from distributed energy resources (DERs) such as solar panels. The system operates over a finite time horizon which is divided into $T$ time intervals, each with a length of $\Delta$. Time intervals $t$ are indexed in the set $\mathcal{T}=\{ 0, 1, \dots , T-1 \}$. The net supply of prosumer $i$, which is the difference between the local electricity production from solar panels and the uncontrollable loads, is denoted by $a_i(t) \in \mathbb{R}$.  
The loads of each prosumer $i$ that can be actuated (i.e., controlled) are associated with dynamics represented by the linear difference equation
\begin{equation}\label{eq_state}
	\mathbf x_i(t+1) = \mathbf A_i \mathbf x_i(t) + \mathbf B_i \mathbf u_i(t), \quad t \in \mathcal{T},
\end{equation}
where $\mathbf x_i(t) \in \mathbb{R}^{n}$ is the dynamical state (e.g., the state of charge (SoC) of an EV battery), $\mathbf x_i(0) \in \mathbb{R}^{n}$ is the initial state (e.g., the initial SoC), and $\mathbf u_i(t) \in \mathbb{R}^{m}$ is the control action (e.g., charge/discharge rate of an EV battery). In addition, $\mathbf A_i \in \mathbb{R}^{n \times n}$ and $\mathbf B_i \in \mathbb{R}^{n \times m}$ are fixed matrices. The state and control inputs are subject to physical constraints 
$	\underline{\mathbf x}_i \leq \mathbf x_i(t) \leq \overline{\mathbf x}_i$ and $  \underline{\mathbf u}_i\leq \mathbf u_i(t) \leq \overline{\mathbf u}_i$, 
where $\underline{\mathbf x}_i \in \mathbb{R}^{n}$ and $\underline{\mathbf u}_i \in \mathbb{R}^{m}$ represent lower bounds, and $\overline{\mathbf x}_i \in \mathbb{R}^{n}$ and $\overline{\mathbf u}_i \in \mathbb{R}^{m}$  represent upper bounds. 
Each prosumer $i$ has a preference regarding their power consumption. Such preferences are encoded in a utility function $f_i(\mathbf x_i(t), \mathbf u_i(t)): \mathbb{R}^{n} \times \mathbb{R}^{m} \mapsto \mathbb{R}$ that represents the satisfaction of prosumer $i$ as a result of reaching the state 
 $\mathbf x_i(t)$ and taking the control action $\mathbf u_i(t)$ at time step $t \in \mathcal{T}$. For example, a prosumer is potentially satisfied and achieves a high $f_i(\cdot)$ if the SoC $\mathbf x_i(t)$ of an EV battery is close to $80 \%$ of the battery capacity, or the charge/discharge cycles are relatively low. The satisfaction of the prosumer at the final time step $T$ is denoted by the terminal utility function $\phi_i(\mathbf x_i(T)): \mathbb{R}^{n} \mapsto \mathbb{R}$ that depends only on the terminal state $\mathbf x_i(T)$.
The energy consumed or withdrawn by the controllable loads for taking the control action $\mathbf u_i(t)$ is denoted by  $h_i(\mathbf u_i(t)): \mathbb{R}^{m} \mapsto \mathbb{R}$. Each prosumer can trade their excess supply $a_i(t) - h_i(\mathbf u_i(t))$ through the grid by a trading decision variable $p_i(t) \in \mathbb{R}$ that represents the  average active power injection at node $i$ and time interval $t \in \mathcal{T}$, such that $p_i(t) \leq a_i(t) - h_i(\mathbf u_i(t))$.

\subsection{Power Grid  Model}\label{section:microgrid_model}
We consider a radial distribution grid with  $N+1$ nodes indexed in the set $\mathcal{N}\cup \{0\}$ and operates in an islanded mode. Each prosumer is connected to a node $i \in \mathcal{N}$, while node $0$ represents the reference node or feeder. The net average active and reactive power injections at each node $i \in \mathcal{N}\cup \{0\}$ and time interval $t \in \mathcal{T}$ are denoted by $p_i(t) \in \mathbb{R}$ and $q_i(t) \in \mathbb{R}$, respectively. In an islanded mode, there is no power injection from the bulk grid, meaning that $p_0(t)=q_0(t)=0$ for $t \in \mathcal{T}$. The voltage phasor at each node $i \in \mathcal{N}\cup \{0\}$ and time step $t$ is denoted by $V_i(t) \in \mathbb{C}$, and its squared magnitude is denoted by $v_i(t)=|V_i(t)|^2$. The voltage at the feeder is given and fixed such that $v_0(t)=v_0$ for $t \in \mathcal{T}$. 
Denote by $\mathcal{E}$ the set of all lines in the grid, where  $(i, j) \in \mathcal{E}$ denotes the line connecting nodes $i$ and $j$, characterized by  resistance $r_{ij}$ and reactance $ \chi_{ij}$.
Let $P_{ij}(t)$ and $Q_{ij}(t)$ denote the average active and reactive power flows from node $i$ to $j$, respectively. The power flow in a steady state and in an islanded radial grid can be approximately represented by the LinDistFlow model \cite{baran1989optimal}
\begin{equation}\label{eq1}
	\begin{aligned}
		&P_{ij}(t) = -p_j(t) + \sum_{k:(j,k) \in \mathcal{E}} P_{jk}(t),\\
		&Q_{ij}(t) = -q_j(t) + \sum_{k:(j,k) \in \mathcal{E}} Q_{jk}(t),\\
		&v_i (t)- v_j(t)= 2\big(r_{ij}P_{ij}(t) + \chi_{ij}Q_{ij}(t) \big).
	\end{aligned}
\end{equation}
In the remainder, by power we mean average power. Equation \eqref{eq1} implies that all injected power must be balanced at each time step, i.e., 
$	\sum_{i=1}^{N} p_i(t)=0$ and $
 \sum_{i=1}^{N} q_i(t)=0$ for  $t \in \mathcal{T}$.

\subsection{Power Grid Constraints}
Let $\mathbf p(t)=(p_1(t), \dots, p_N(t))^\top$ and $\mathbf q(t)=(q_1(t), \dots, q_N(t))^\top$ denote the vectors of active and reactive power injections at all nodes over time interval $t \in \mathcal{T}$, respectively. In this paper, we consider grid constraints in the separable form \cite{mahmoodi2023capacity}
\begin{equation}\label{eq:constraints}
    \mathbf F_t(\mathbf p(t), \mathbf q(t))=\sum_{i=1}^N \mathbf g_{it}( p_i(t),  q_i(t)) \leq \boldsymbol \nu(t), \quad t \in \mathcal{T},
\end{equation}
where $\mathbf F_t: \mathbb{R}^{N} \times \mathbb{R}^{N} \mapsto \mathbb{R}^M$ is an affine vector function, and $M$ is the total number of grid constraints at time step $t \in \mathcal{T}$. In addition, $\mathbf g_{it}: \mathbb{R}\times \mathbb R \mapsto \mathbb{R}^M$ is a local and affine vector function associated with prosumer $i$ at time step $t \in \mathcal{T}$ that represents the role of the prosumer in forming the $M$ global grid constraints. The characteristics of these functions depend on the grid properties, such as line impedances and the grid topology. The vector $\boldsymbol \nu(t) \in \mathbb{R}^M$ represents the bound on the grid constraints which is given and known from the grid physical and operational limits. Two examples of the grid constraints that can be written in the form of \eqref{eq:constraints} are voltage and thermal constraints \cite{mahmoodi2023capacity}. In the following we describe voltage constraints in detail.

Denote by $\mathcal{P}_i$  the set of lines on the unique path from node $0$ to node $i$. Let 
$R_{ik}=2\sum_{(h, l) \in \mathcal{P}_i \cap \mathcal{P}_k} r_{hl}$, and $X_{ik}=2\sum_{(h, l) \in \mathcal{P}_i \cap \mathcal{P}_k} \chi_{hl}$. Note that $R_{ik}$ and $X_{ik}$ are the sensitivity matrices under no-load conditions that satisfy 
$R_{ik}=\frac{\partial v_i(t)}{\partial p_k(t)}, X_{ik}=\frac{\partial v_i(t)}{\partial q_k(t)}$. The voltage equation in \eqref{eq1} can be written in terms of the active and reactive power injections such that 
	$v_i (t)=  v_0 +\sum_{k=1}^{N} \big(R_{ik} p_k (t)+  X_{ik}  q_k (t)\big)$ for $i \in \mathcal{N}$ \cite{farivar2013equilibrium}.
The voltage magnitude at each node $i \in \mathcal{N}$ must stay within an acceptable range, typically within $\pm5 \% $ of a nominal voltage. Such voltage constraints can be written as 
\begin{equation}\label{eq_voltage}
	\underline {v}_i \leq v_0 +\sum_{k=1}^{N} \big(R_{ik} p_k (t)+  X_{ik}  q_k (t)\big) \leq \overline { v}_i, \quad i \in \mathcal{N},
\end{equation}
where $\underline { v}_i$ and $\overline { v}_i$ are the lower and upper bounds on the squared voltage magnitude. It is straightforward to write \eqref{eq_voltage} in the vector form \eqref{eq:constraints}.

In the rest of this section, we review related works in the literature on energy markets with locational (nonuniform) prices in electric grids, as well as the concept of DOEs which are required for further developments. Let $\mathbf p_i=( p_i(0), \dots,  p_i(T-1))^\top$, $\mathbf q_i=( q_i(0), \dots,  q_i(T-1))^\top$, and $\mathbf U_i=(\mathbf u_i^\top(0), \dots, \mathbf u_i^\top(T-1))^\top$ denote the vectors of active power injections,  reactive power injections, and control inputs associated with prosumer $i$ over the entire time horizon, respectively. Denote by $\mathbf p(t)= ( p_1(t), \dots,  p_N(t) )^\top$, $\mathbf q(t)= ( q_1(t), \dots,  q_N(t) )^\top$, and $\mathbf u(t)= (\mathbf u_1^\top(t), \dots, \mathbf u_N^\top(t) )^\top$ the vectors of  active power injections, reactive power injections, and control inputs associated with all prosumers at time step $t \in \mathcal{T}$, respectively. Also, let $\mathbf P=(\mathbf p^\top(0), \dots, \mathbf p^\top(T-1))^\top$, 
  $\mathbf Q=(\mathbf q^\top(0), \dots, \mathbf q^\top(T-1))^\top$, and $\mathbf U=(\mathbf u^\top(0), \dots, \mathbf u^\top(T-1))^\top$ denote  the vectors of all active power injection, reactive power injection, and control inputs associated with all prosumers at all time steps, respectively. 
\subsection{Competitive Market With Locational Pricing}\label{sec:preliminary_locational}
Market design in the presence of grid constraints, such as voltage and thermal constraints, potentially leads to different locational prices at different nodes. For example, \cite{li2015market} considers an islanded grid with a static formulation and proposes a competitive market for energy exchange that respects voltage constraints, leading to different locational prices at different nodes. Inspired by the work in \cite{li2015market}, we proposed a competitive market for energy exchange in  islanded electric grids with voltage constraints   considering a dynamic formulation  \cite{salehi2024acc}. In the following, we describe such a competitive market and the associated equilibrium point. 

Consider the system model introduced in Section \ref{sec:preliminaries}. For simplicity, we neglect thermal constraints and we only focus on voltage constraints in \eqref{eq_voltage}. Also, we suppose that reactive power injection is negligible at each node, i.e., $q_i(t)=0$ for $i \in \mathcal{N}, t \in \mathcal{T}$. Denote by $\lambda_i(t) \in \mathbb{R}$ the locational price at node $i\in \mathcal{N}$ and time step $t \in \mathcal T$ for unit active power exchange.  Prosumers $i \in \mathcal{N}$ are price takers who receive  price signals $\lambda_i(t)$ from the market operator and form a P2P competitive market among themselves. Each prosumer aims to maximize their payoff as the summation of their utilities $f_i(\mathbf x_i(t), \mathbf u_i(t))$  and $\phi_i(\mathbf x_i(T))$, representing their comfort level from consuming energy throughout the entire time horizon and at the terminal time step, and the income or expenditure $\lambda_i(t)p_i(t)$ from energy trading. Such a competitive market reaches an equilibrium if no participant has an incentive to change their decision unilaterally and the total traded power is balanced while the voltage constraints are satisfied.
\subsubsection{Competitive Equilibrium}
 A well-known equilibrium concept for competitive markets  is a \textit{competitive equilibrium} which is defined in the following. 
\begin{definition}[as in \cite{salehi2024acc}]\label{def0}
    Optimal decisions $(\mathbf u_1^\ast(t),  p_1^\ast(t), \dots, \mathbf u_N^\ast(t),  p_N^\ast(t))$, along with optimal locational prices $(\lambda_1^\ast(t), \cdots, \lambda_N^\ast(t))$ for all $t \in \mathcal{T}$, form a \textit{competitive equilibrium} if the following statements hold.

\begin{itemize}
    \item[(i)]  Given $\lambda^\ast_i (t)$, each prosumer $i$ maximizes their payoff  at $(\mathbf u_i^\ast(t),  p_i^\ast(t))$ for $t \in \mathcal{T}$ as a solution to 
\begin{equation}\label{eq_competitive_0}
\begin{aligned}
\max_{{\mathbf U_i}, \mathbf p_i} \quad &  \sum_{t=0}^{T-1} \Big(f_i(\mathbf x_i(t), \mathbf u_i(t)) + \lambda_i^\ast(t)  p_i(t) \Big)  +{\phi_i(\mathbf x_i(T)}) \\
{\rm s.t.} 
\quad &  \mathbf x_i(t+1)= \mathbf A_i \mathbf x_i(t)+ \mathbf B_i \mathbf u_i(t),  \\
\quad &   p_i(t) \leq  a_i (t)-  h_i(\mathbf u_i(t)),  \\
\quad &  \underline{\mathbf x}_i \leq \mathbf x_i(t) \leq \overline{\mathbf x}_i, \,\,\, \underline{\mathbf u}_i\leq \mathbf u_i(t) \leq \overline{\mathbf u}_i,
	\,\,\,  t \in \mathcal{T}.
\end{aligned} 
\end{equation}

\item [(ii)] The total traded power is balanced at each time step, i.e., 
$
\sum_{i=1}^N  p_i^\ast(t) =0
$, for $t \in \mathcal{T}$.

 \item[(iii)] The voltage constraints are satisfied such that
	 $
	 	\underline {v}_i \leq v_0+ \sum_{k =1}^N R_{ik}  p_k^\ast(t) \leq \overline { v}_i
	 $ for $i \in \mathcal{N}$ and $t \in \mathcal{T}$.
	
		\item[(iv)] The locational prices satisfy 
    $\lambda_i^\ast(t)=\alpha^\ast(t)+\sum_{k=1}^{N} \big[\underline\xi_k^\ast(t)-\overline\xi_k^\ast (t) \big] R_{ki}$ for $i \in \mathcal{N}$ and $t \in \mathcal{T}$,
	where $\alpha^\ast(t)$ is the electricity price and $\underline\xi_k^\ast(t)-\overline\xi_k^\ast (t)$ is the  price for a unit voltage change at node $k$.
 
		\item[(v)]  If the voltage constraints are not binding, the price for voltage change is zero for $i \in \mathcal{N}$ and  $t \in \mathcal{T}$, i.e.,
		$
			\underline \xi_i^\ast(t)\Big(\underline v_i - v_0-\sum_{k =1}^N R_{ik}  p_k^\ast(t)\Big)=  \overline \xi_i^\ast(t) \Big( v_0+ \sum_{k =1}^N R_{ik}  p_k^\ast (t)-\overline{ v}_i \Big)=0
		$.
		
\end{itemize}
\end{definition}

\subsubsection{Social Welfare Maximization}
From a system-level perspective, a decision $(\mathbf u_1^\star(t),  p_1^\star(t),  \dots, \mathbf u_N^\star(t),  p_N^\star(t))$ for all $t \in \mathcal{T}$ is efficient if it maximizes the social welfare of the whole society  as a solution to \cite{salehi2024acc}
\begin{equation}\label{eq_social_locational}
		\begin{aligned}
			\max_{{\mathbf U}, \mathbf P} \quad &  \sum_{i=1}^{N}\Big(\sum_{t=0}^{T-1} f_i(\mathbf x_i(t), \mathbf u_i(t))+ {\phi_i(\mathbf x_i(T))} \Big)\\
			{\rm s.t.} 
			\quad &  \mathbf x_i(t+1)= \mathbf A_i \mathbf x_i(t)+ \mathbf B_i \mathbf u_i(t),  \\
			\quad &  p_i(t) \leq a_i (t)- h_i(\mathbf u_i(t)),  \\
			\quad &	\sum_{i=1}^N p_i(t) = 0,
			 \quad  \underline {v}_i \leq v_0+\sum_{k =1}^N R_{ik}  p_k(t) \leq \overline { v}_i, \\
			 \quad &\underline{\mathbf x}_i \leq \mathbf x_i(t) \leq \overline{\mathbf x}_i, \,\,\, \underline{\mathbf u}_i\leq \mathbf u_i(t) \leq \overline{\mathbf u}_i, \,\,\, i \in \mathcal{N}, \,\,\, t \in \mathcal{T}.
		\end{aligned}
	\end{equation}

\begin{assumption}\label{assumption1}
    Suppose $f_i(\cdot)$, $\phi_i(\cdot)$, and $-h_i(\cdot)$ are concave functions for $i \in \mathcal{N}$.
\end{assumption}

\begin{assumption}\label{assumption1_2}
     Slater's condition holds for \eqref{eq_competitive_0} and \eqref{eq_social_locational}. 
\end{assumption}

\begin{theorem}[as in \cite{salehi2024acc}]
    Let Assumptions \ref{assumption1} and \ref{assumption1_2} hold. Given feasible initial conditions $\mathbf x_i(0)$ for $i \in \mathcal{N}$, the competitive equilibrium in Definition \ref{def0} is equivalent to the social welfare maximization solution to \eqref{eq_social_locational}.
\end{theorem}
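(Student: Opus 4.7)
The plan is to establish the equivalence via a Lagrangian/KKT argument, which is the standard route for proving equivalence between a competitive equilibrium and a social welfare optimum when all utilities are concave and Slater's condition holds. Assumption \ref{assumption1} makes \eqref{eq_social_locational} a concave maximization with affine coupling constraints (power balance, voltage inequalities, and the trading inequality $p_i(t)\le a_i(t)-h_i(\mathbf u_i(t))$, noting $-h_i$ is concave so the latter is convex), and Assumption \ref{assumption1_2} guarantees strong duality. Hence the KKT conditions are necessary and sufficient for optimality of \eqref{eq_social_locational}, and likewise for each prosumer's subproblem \eqref{eq_competitive_0}.

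First I would write out the Lagrangian of \eqref{eq_social_locational}, introducing multipliers $\alpha(t)\in\mathbb R$ for the balance constraint $\sum_i p_i(t)=0$, nonnegative multipliers $\underline\xi_i(t),\overline\xi_i(t)$ for the two voltage inequalities at node $i$, nonnegative multipliers $\mu_i(t)$ for $p_i(t)\le a_i(t)-h_i(\mathbf u_i(t))$, and appropriate multipliers for the state/control box constraints and the dynamics \eqref{eq_state}. Collecting the terms involving only prosumer $i$'s variables $(\mathbf U_i,\mathbf p_i,\mathbf x_i)$, the Lagrangian separates across $i$ once we define
\begin{equation*}
\lambda_i(t)\;:=\;\alpha(t)+\sum_{k=1}^{N}\bigl[\underline\xi_k(t)-\overline\xi_k(t)\bigr]R_{ki},
\end{equation*}
because the voltage terms $\sum_i\bigl[\underline\xi_i(t)(\underline v_i-v_0-\sum_k R_{ik}p_k(t))+\overline\xi_i(t)(v_0+\sum_k R_{ik}p_k(t)-\overline v_i)\bigr]$ contribute, for each prosumer $k$, precisely the coefficient $\sum_i(\overline\xi_i(t)-\underline\xi_i(t))R_{ik}$ on $p_k(t)$, which by symmetry of $R$ combines with $\alpha(t)$ to yield $\lambda_k(t)p_k(t)$ (with a sign that matches \eqref{eq_competitive_0}).

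Next, for the forward direction, suppose $(\mathbf U^\star,\mathbf P^\star)$ solves \eqref{eq_social_locational} with dual variables $(\alpha^\star,\underline\xi^\star,\overline\xi^\star,\dots)$. I would define $\lambda_i^\star(t)$ as above, which gives condition (iv) by construction. Feasibility of the social problem immediately yields the balance condition (ii) and the voltage condition (iii), while complementary slackness on the voltage multipliers gives (v). Finally, because the Lagrangian separates, the prosumer-$i$ block of the KKT system for \eqref{eq_social_locational} is exactly the KKT system of \eqref{eq_competitive_0} with $\lambda_i(t)=\lambda_i^\star(t)$; by concavity and Slater's condition applied to \eqref{eq_competitive_0}, this certifies $(\mathbf U_i^\star,\mathbf p_i^\star)$ as an optimizer of \eqref{eq_competitive_0}, establishing (i). For the reverse direction, given a competitive equilibrium, I would collect the per-prosumer KKT systems and, using (ii)--(v) together with the definition of $\lambda_i^\star(t)$ in (iv), reassemble them into the KKT system of \eqref{eq_social_locational} with the same $\alpha^\star,\underline\xi^\star,\overline\xi^\star$; sufficiency of KKT (from concavity and Slater) then yields optimality for the centralized problem.

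The main obstacle I anticipate is the bookkeeping in the decomposition step: one must verify carefully that when the voltage-constraint multipliers are combined with $\alpha(t)$ through the sensitivity coefficients $R_{ik}$, the coefficient on $p_i(t)$ in the Lagrangian is exactly $\lambda_i(t)$ as defined in (iv), and that no cross-terms between prosumers remain. This relies on the identity $R_{ik}=R_{ki}$ (symmetry of the LinDistFlow sensitivity matrix, which follows from its derivation as a sum over the common path $\mathcal P_i\cap\mathcal P_k$) and on the fact that the trading inequality, state dynamics, and box constraints are all local to prosumer $i$, so only the balance and voltage constraints couple the prosumers. Once this decomposition is verified, the rest reduces to invoking strong duality in both directions.
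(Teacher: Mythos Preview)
Your proposal is correct and follows essentially the same Lagrangian-decomposition-plus-strong-duality route the paper uses; the paper does not reprove this particular theorem (it is imported from \cite{salehi2024acc}), but the analogous proofs in Appendices~\ref{Appendix_Theorem2}, \ref{Appendix_Theorem3}, and especially the Lagrangian written out in Appendix~\ref{Appendix_Theorem5} carry out precisely the separation you describe. One minor remark: the symmetry $R_{ik}=R_{ki}$ you flag as a potential obstacle is in fact not needed for the bookkeeping --- after swapping the order of summation in $\sum_i\xi_i(t)\sum_k R_{ik}p_k(t)$, the coefficient on $p_i(t)$ is already $\sum_k\xi_k(t)R_{ki}$, which matches Definition~\ref{def0}(iv) directly without invoking symmetry.
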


\subsection{Dynamic Operating Envelopes}
\begin{definition}[as in \cite{mahmoodi2023capacity}]
    A DOE refers to a nonempty, convex, and closed set $\Omega_{it}\subseteq \mathbb{R}^2$ in the $(p_{i}(t), q_i(t))$ coordinates that identifies a range for real and reactive power injections  at node $i$ and time $t$ that can be safely accommodated by the electric grid without violating operational constraints.
\end{definition}
DOEs are determined by a system operator responsible for the safe and secure operation of an electrical grid. To obtain DOEs associated with \eqref{eq:constraints}, we use the right-hand side decomposition (RHSD) method \cite{kornai1965two ,konnov2014right}, and decompose the right-hand side of \eqref{eq:constraints} as
\begin{equation}\label{eq:RHSD}
    \boldsymbol \nu(t) = \sum_{i=1}^N \mathbf w_i(t),
\end{equation}
where $\mathbf w_i(t) \in \mathbb{R}^M$ represents the share of prosumer $i$ in satisfying grid constraints at time interval $t$. Substituting  \eqref{eq:RHSD} into \eqref{eq:constraints} yields
$
    \sum_{i=1}^N \mathbf g_{it}( p_i(t),  q_i(t)) \leq \sum_{i=1}^N \mathbf w_i(t)
$,
which can be decomposed to 
$
   \mathbf g_{it}( p_i(t),  q_i(t)) \leq  \mathbf w_i(t)
$, for $i \in \mathcal{N}$.
Then, DOEs can be obtained as 
$
    \Omega_{it}=\big\{(p_{i}(t), q_i(t)) \in \mathbb{R}^2: \mathbf g_{it}( p_i(t),  q_i(t)) \leq \mathbf w_i(t)\big\}$
for each prosumer $i\in \mathcal{N}$ at time step $t\in \mathcal{T}$ \cite{mahmoodi2023capacity}.

There exists an infinite number of configurations for $\mathbf w_i(t)$ that satisfy \eqref{eq:RHSD}, and therefore an infinite number of DOEs, among which the system operator looks for the one that optimizes its objective $\mathbb F (\mathbf p(t), \mathbf q(t))$. System operators can have different objectives in determining DOEs, such as maximizing social welfare, minimizing loss, and maximizing grid capacity. In this paper, we suppose that the operator aims to maximize the capacity of the electric grid by maximizing the total possible active power injection to the grid at each time step $t \in \mathcal{T}$, i.e., $\mathbb F (\mathbf p(t), \mathbf q(t))= \sum_{i=1}^N \max \{0, p_i(t)\}$. Denote $\mathbf w(t)=(\mathbf w_1^\top(t), \dots, \mathbf w_N^\top(t))^\top$. An optimal  $\mathbf w(t)$ that serves the objective of the system operator is then obtained solving
\begin{equation}\label{eq:DOE_2}
		\begin{aligned}
			\max_{\mathbf w(t), \mathbf p(t), \mathbf q(t)} \quad &  \sum_{i=1}^{N} \max\{0, p_i(t)\} - \epsilon O(\mathbf w(t)) \\
			{\rm s.t.} 
			\quad &
			 \mathbf g_{it}( p_i(t),  q_i(t)) \leq \mathbf w_i(t) \\
    \quad & \sum_{i=1}^N \mathbf w_i(t) = \boldsymbol \nu(t),
    \,\,\, i \in \mathcal{N},
		\end{aligned}
	\end{equation}
where $\epsilon O(\mathbf w(t))$ is a regularization term and $\epsilon$ is a small positive number. The function $O(\mathbf w(t)): \mathbb{R}^{NM} \mapsto \mathbb{R}$ indicates the  preferences of the operator in assigning different shares of DOEs to prosumers. For example, the operator might prefer to assign larger DOEs to prosumers closer to the feeder, or it might prefer to assign near equal DOEs to all prosumers. 
Following \cite{mahmoodi2023capacity}, we adopt the notion of close-to-equal contributions to promote fairness in assigning DOEs.
Denote by  $\overline{\mathbf{E}} \in \mathbb R^{NM}, \overline{\mathbf{E}}=(\boldsymbol \nu^\top(t)/N, \dots, \boldsymbol \nu^\top(t)/N)^\top$ the equality index. A function $O(\mathbf w(t))$ that leads to close-to-equal contributions can be defined as 
    \begin{equation}    \label{eq:close_to_equal}  
    O(\mathbf w(t))=\|\mathbf w(t)- \overline{\mathbf{E}} \|.
    \end{equation}
    
Then, solving \eqref{eq:DOE_2}, the operator obtains $\mathbf w^\ast(t)=(\mathbf w_1^{\ast\top}(t), \dots, \mathbf w_N^{\ast\top}(t))^\top$ and sends the DOEs 
\begin{equation}\label{eq_DOE_3}
\mathbf g_{it}( p_i(t),  q_i(t)) \leq \mathbf w_i^\ast(t)   
\end{equation}
to the associated prosumers $i \in \mathcal{N}$ at each time step $t \in \mathcal{T}$. 

%
\section{Competitive Market With Uniform Pricing}\label{sec:competitive}
Market design in the presence of grid constraints \eqref{eq:constraints} potentially leads to different locational prices at different nodes. 
Such locational price differences cause social inequities that affect market participants differently based on their geographic location in the grid \cite{tangeras2021competitive, edmunds2017distributed, CitiPower}. To address this issue, we propose two types of competitive markets in islanded radial grids that ensure uniform pricing for all participants while satisfying grid constraints. We use the concept of DOEs  to decompose grid constraints \eqref{eq:constraints} into local constraints \eqref{eq_DOE_3} that can be accommodated by each prosumer independently. 

In what follows, for simplicity, we suppose that the reactive power injection at each node is negligible, i.e., $q_i(t)=0$ for $i \in \mathcal{N}$, $t \in \mathcal{T}$. Extensions to consider the regulation of reactive power injection at each node are straightforward.


\subsection{Competitive Market With DOEs}\label{sec:uniform_1}
A P2P competitive energy market for the islanded radial grid introduced in Section \ref{sec:preliminaries} can be designed in which prosumers are price takers and they receive a uniform  price signal $\lambda(t) \in \mathbb{R}$ for unit energy trading from the market operator at each time step $t \in \mathcal{T}$. Prosumers trade their local electricity supply with their peers according to the announced price signal and the assigned DOEs $\mathbf g_{it}( p_i(t),  q_i(t)) \leq \mathbf w_i^\ast(t)$. They aim to maximize their payoff as the summation of their utility $f_i(\mathbf x_i(t), \mathbf u_i(t))$ and $\phi_i(\mathbf x_i(T))$, representing their comfort/satisfaction level from energy consumption throughout the entire time horizon and at the terminal time step, respectively, and the income or expenditure $\lambda(t)p_i(t)$ from energy transaction. Since energy transaction is done through the electric grid, the grid constraints must be always satisfied to enable a safe and secure system operation. Using DOEs, the grid constraints are decomposed to some local constraints that would be accommodated by each participant independently. Such a market is at an equilibrium if no participant has an incentive to change their decision and the  total traded power is balanced.  

\subsubsection{Competitive Equilibrium}
An equilibrium point for the prescribed competitive market falls into the category of \textit{competitive equilibrium} and is defined in the following.
\begin{definition}\label{def2}
   Optimal decisions $(\mathbf u_1^\ast(t),  p_1^\ast(t),  \dots, \mathbf u_N^\ast(t),  p_N^\ast(t))$, along with uniform optimal prices $ \lambda^\ast(t)$ for all $t \in \mathcal{T}$, form a \textit{competitive equilibrium} if the following statements hold.
   \begin{itemize}
       \item [(i)] Given $ \lambda^\ast(t)$ and $\mathbf w_i^\ast(t)$, each prosumer $i \in \mathcal{N}$ maximizes their payoff at $(\mathbf u_i^\ast(t), p_i^\ast(t))$ for $t \in \mathcal{T}$ as a solution to
       \begin{equation} \label{eq_competitive_2}
			\begin{aligned}		\max_{{\mathbf U_i}, \mathbf p_i} \quad &  \sum_{t=0}^{T-1} \Big(f_i(\mathbf x_i(t), \mathbf u_i(t)) +\lambda^\ast(t) p_i(t) \Big)+{\phi_i(\mathbf x_i(T)}) \\
				{\rm s.t.} 
				\quad &  \mathbf x_i(t+1)= \mathbf A_i \mathbf x_i(t)+ \mathbf B_i \mathbf u_i(t),  \\
				\quad &  p_i(t) \leq a_i (t)- h_i(\mathbf u_i(t)),  \,\,\, \mathbf g_{it}( p_i(t)) \leq \mathbf w_i^\ast(t), 
    \\
				\quad &
    \underline{\mathbf x}_i \leq \mathbf x_i(t) \leq \overline{\mathbf x}_i, 
   \,\,\, \underline{\mathbf u}_i\leq \mathbf u_i(t) \leq \overline{\mathbf u}_i, 
				\,\,\,  t \in \mathcal{T}.
			\end{aligned}
		\end{equation}

  \item [(ii)] All traded power is balanced at each time step; that is,
  $
      \sum_{i=1}^N p_i^\ast(t)=0
  $ for $t \in \mathcal{T}$.
   \end{itemize}
\end{definition}

\subsubsection{Social Welfare Maximization}
From a system-level perspective, a decision $(\mathbf u_1^\star(t),  p_1^\star(t),  \dots, \mathbf u_N^\star(t),  p_N^\star(t))$ for all $t \in \mathcal{T}$ is efficient if it maximizes the social welfare of the whole society, defined  as the summation of all utilities $f_i(\cdot)$ and $\phi_i(\cdot)$ of all participants, as a solution to
\begin{equation}\label{eq_social_2}
		\begin{aligned}
			\max_{{\mathbf U}, \mathbf P} \quad &  \sum_{i=1}^{N}\Big(\sum_{t=0}^{T-1} f_i(\mathbf x_i(t), \mathbf u_i(t))+ {\phi_i(\mathbf x_i(T))} \Big)\\
			{\rm s.t.} 
			\quad &  \mathbf x_i(t+1)= \mathbf A_i \mathbf x_i(t)+ \mathbf B_i \mathbf u_i(t),  \\
			\quad &  p_i(t) \leq a_i (t)- h_i(\mathbf u_i(t)),   \\
				\quad & \mathbf g_{it}( p_i(t)) \leq \mathbf w_i^\ast(t), \,\,\,	\sum_{i=1}^N p_i(t) = 0,     \\
				\quad &
    \underline{\mathbf x}_i \leq \mathbf x_i(t) \leq \overline{\mathbf x}_i, 
   \,\,\, \underline{\mathbf u}_i\leq \mathbf u_i(t) \leq \overline{\mathbf u}_i, 
				\,\,\,   i \in \mathcal{N}, \,\,\, t \in \mathcal{T}.
		\end{aligned}
	\end{equation}

\begin{assumption}\label{assumption2}
 Slater's condition holds for \eqref{eq_competitive_2} and \eqref{eq_social_2}. 
\end{assumption}

\begin{theorem}\label{theorem2}
    Let Assumptions \ref{assumption1} and \ref{assumption2} hold. Given feasible initial conditions $\mathbf x_i(0)$ for $i \in \mathcal{N}$, the following statements hold.
    \begin{itemize}
        \item A competitive equilibrium in Definition \ref{def2} is equivalent to a social welfare maximization solution to \eqref{eq_social_2}. 
        \item Let $-\alpha^\ast(t)$ be an optimal dual variable corresponding to the balancing equality constraint $\sum_{i=1}^N p_i(t)=0$ in \eqref{eq_social_2} for $t\in \mathcal{T}$. Then, a uniform equilibrium price can be obtained as $\lambda^\ast(t)=\alpha^\ast(t)$.
        \end{itemize}
\end{theorem}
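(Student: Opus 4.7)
The plan is to leverage Lagrangian duality, exploiting that under Assumptions~\ref{assumption1} and~\ref{assumption2}, the social welfare problem \eqref{eq_social_2} is a convex program (maximizing a sum of concaves subject to convex constraints) with Slater's condition, so the KKT conditions are necessary and sufficient for optimality. First I would form the Lagrangian of \eqref{eq_social_2}, attaching the dual variable $-\alpha(t)$ to the balance constraint $\sum_{i=1}^N p_i(t)=0$ and treating the other constraints --- the dynamics \eqref{eq_state}, the supply cap $p_i(t)\leq a_i(t)-h_i(\mathbf u_i(t))$, the DOE bound $\mathbf g_{it}(p_i(t))\leq \mathbf w_i^\ast(t)$, and the box constraints on $\mathbf x_i$ and $\mathbf u_i$ --- with their own per-prosumer multipliers. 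The crucial observation is that once $\alpha(t)$ is fixed, the Lagrangian decomposes across $i$ into $N$ independent pieces, each of which is exactly the Lagrangian of \eqref{eq_competitive_2} under the identification $\lambda^\ast(t)=\alpha(t)$.

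Next I would run the equivalence in both directions. For the forward direction, let $(\mathbf U^\star,\mathbf P^\star)$ solve \eqref{eq_social_2} with associated optimal multiplier $\alpha^\star(t)$ on the balance constraint; by strong duality and the separability just noted, each $(\mathbf U_i^\star,\mathbf p_i^\star)$ solves \eqref{eq_competitive_2} with $\lambda^\ast(t)=\alpha^\star(t)$, establishing condition~(i) of Definition~\ref{def2}, while primal feasibility directly delivers condition~(ii). For the converse, given a competitive equilibrium $(\mathbf U^\ast,\mathbf P^\ast,\lambda^\ast)$, the KKT conditions for each local problem \eqref{eq_competitive_2} together with $\sum_i p_i^\ast(t)=0$ can be reassembled into the full KKT system of \eqref{eq_social_2} with $\alpha^\star(t)=\lambda^\ast(t)$; since KKT is sufficient under Assumption~\ref{assumption1}, this yields social-welfare optimality.

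The identification $\lambda^\ast(t)=\alpha^\ast(t)$ in the second bullet then follows for free: the only cross-prosumer coupling in \eqref{eq_social_2} is through the balance constraint, so its shadow price must equal the common signal that sustains the decentralized optima, as can also be read directly from the stationarity condition with respect to $p_i(t)$ in the two problems.

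The main obstacle I foresee is essentially bookkeeping: writing the Lagrangian cleanly and checking separability term by term, in particular ensuring that the dynamic coupling through \eqref{eq_state} (which introduces costate-like multipliers linking $t$ to $t+1$ for each prosumer) stays \emph{intra}-prosumer because $\mathbf A_i$, $\mathbf B_i$, and $\mathbf x_i$ are all indexed only by $i$. A secondary point worth spelling out is that the DOE constraint $\mathbf g_{it}(p_i(t))\leq \mathbf w_i^\ast(t)$ is local and affine, hence convex, so it slots into each prosumer's subproblem without spoiling either separability or the Slater qualification assumed for \eqref{eq_competitive_2}; no convexity or regularity assumption beyond Assumptions~\ref{assumption1} and~\ref{assumption2} is required.
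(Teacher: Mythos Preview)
Your proposal is correct and follows essentially the same Lagrangian-duality-plus-separability argument as the paper: form the Lagrangian of \eqref{eq_social_2}, observe that the only inter-prosumer coupling is the balance constraint, and identify its multiplier with $\lambda^\ast(t)$ so that the separated pieces coincide with the Lagrangians of \eqref{eq_competitive_2}. The only cosmetic difference is that the paper eliminates the dynamics \eqref{eq_state} by substituting $\mathbf x_i(t)=\mathbf A_i^t\mathbf x_i(0)+\sum_{j=0}^{t-1}\mathbf A_i^{t-j-1}\mathbf B_i\mathbf u_i(j)$ (and keeps the box constraints as a polyhedral set $\mathbb U_i$ rather than dualizing them), whereas you propose to carry costate-like multipliers; both routes yield the same per-prosumer decomposition.
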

\begin{proof}
    See Appendix \ref{Appendix_Theorem2}.  
\end{proof}

Using DOEs to decompose the global constraints into local ones makes the approach conservative. That is, in some cases, the assigned DOEs are potentially infeasible to implement in the proposed competitive market due to insufficient generation or consumption. 
To mitigate this conservatism, we introduce an additional degree of freedom in decision-making by designing a competitive market that permits limit trading in DOEs. 

\subsection{Competitive Market With Limit Trading in DOEs}\label{sec:main_limit}
The idea of limit trading in DOEs was first introduced in \cite{umer2023novel}, which enhances the utilization of grid capacity in the presence of DOEs. If some prosumers  have excess DOE limits $\mathbf w_i^\ast(t)-\mathbf g_{it}(p_i(t))$, they can sell them to those who have shortage of DOE limits. Denote by $\boldsymbol l_i(t) \in \mathbb R^M$ the traded limit of prosumer $i$ that is physically constrained by $\boldsymbol l_i(t) \leq \mathbf w_i^\ast(t)-\mathbf g_{it}(p_i(t))$.
Limit trading introduces an additional degree of freedom for prosumers to trade their excess of  DOE limits, which are not otherwise used, and gain profit. Denote by $\boldsymbol \beta(t)\in \mathbb{R}^M$ the uniform limit price for unit traded DOE limit. Obtaining the DOE limits  $\mathbf{w}_i^\ast(t)$ according to \eqref{eq:DOE_2},
 the coordinator identifies optimal uniform electricity prices $\lambda^\ast(t)$ and limit prices $\boldsymbol \beta^\ast(t)$ that clear the market for $t \in \mathcal{T}$, which are then broadcast to market participants. After receiving the uniform electricity prices and limit prices, prosumers form a P2P competitive market to trade their energy and limit resources and gain profit. Each prosumer aims to maximize their payoff as the summation of their utility $f_i(\mathbf x_i(t), \mathbf u_i(t))$ and $\phi_i(\mathbf x_i(T))$, and the income $\lambda(t)p_i(t)+\boldsymbol\beta(t) \cdot \boldsymbol l_i(t)$ from electricity and limit trading over the entire time horizon. 
 
 \subsubsection{Competitive Equilibrium}
 An equilibrium point for such a market falls into the category of \textit{competitive equilibrium} and is defined in the following.
 \begin{definition}\label{def3}
     Optimal decisions $(\mathbf u_1^\ast(t),  p_1^\ast(t), \boldsymbol l_1^\ast(t), \dots, \mathbf u_N^\ast(t),  p_N^\ast(t), \boldsymbol l_N^\ast(t))$, along with optimal uniform prices $ \lambda^\ast(t)$ and $\boldsymbol \beta^\ast(t)$ for all $t \in \mathcal{T}$, form a \textit{competitive equilibrium} if the following statements hold.
     \begin{itemize}
         \item [(i)] Given $\lambda^\ast(t)$, $\boldsymbol \beta^\ast(t)$, and $\mathbf w_i^\ast(t)$, each prosumer $i \in \mathcal{N}$ maximizes their payoff at $(\mathbf u_i^\ast(t), p_i^\ast(t), \boldsymbol l_i^\ast(t))$ for $t \in \mathcal{T}$ as a solution to       
         \begin{equation}\label{eq17}
			\begin{aligned}
				\max_{{\mathbf U_i}, \mathbf p_i, \mathbf L_i} \quad &  \sum_{t=0}^{T-1} f_i(\mathbf x_i(t), \mathbf u_i(t)) + {\phi_i(\mathbf x_i(T)}) \\ &+ \sum_{t=0}^{T-1} \Big( \lambda^\ast(t) p_i(t)+\boldsymbol  \beta^\ast(t) \cdot \boldsymbol l_i(t)\Big) \\
				{\rm s.t.} 
				\quad &  \mathbf x_i(t+1)= \mathbf A_i \mathbf x_i(t)+ \mathbf B_i \mathbf u_i(t),  \\
				\quad &  p_i(t) \leq a_i (t)- h_i(\mathbf u_i(t)),  \\
				\quad & \boldsymbol l_i(t) \leq \mathbf w_i^\ast(t)-\mathbf g_{it}( p_i(t)), \\
				\quad &
    \underline{\mathbf x}_i \leq \mathbf x_i(t) \leq \overline{\mathbf x}_i, 
   \,\,\, \underline{\mathbf u}_i\leq \mathbf u_i(t) \leq \overline{\mathbf u}_i, 
				\,\,\,  t \in \mathcal{T},
			\end{aligned}
		\end{equation}
where $\mathbf L_i=(\boldsymbol l_i^\top(0), \dots, \boldsymbol l_i^\top(T-1))^\top$ is the vector of all traded limits associated with participant $i$ over the entire time horizon.
  \item [(ii)] All traded power and all traded limit are balanced at each time step; that is,
  $
      \sum_{i=1}^N p_i^\ast(t)=0 $ and $\sum_{i=1}^N \boldsymbol l_i^\ast(t)=0$ for $t\in \mathcal{T}
  $.
     \end{itemize}
 \end{definition}

 \subsubsection{Social Welfare Maximization}
 From a system-level perspective, a decision $(\mathbf u_1^\star(t),  p_1^\star(t), \boldsymbol l_1^\star,  \dots, \mathbf u_N^\star(t),  p_N^\star(t), \boldsymbol l_N^\star)$ for all $t \in \mathcal{T}$ is efficient if it maximizes the social welfare of the whole society, defined  as the summation of all utilities $f_i(\cdot)$ and $\phi_i(\cdot)$ of all participants, as a solution to
 \begin{equation}\label{eq2}
		\begin{aligned}
			\max_{{\mathbf U}, \mathbf P, \mathbf L} \quad &  \sum_{i=1}^{N}\Big(\sum_{t=0}^{T-1} f_i(\mathbf x_i(t), \mathbf u_i(t))+ {\phi_i(\mathbf x_i(T))} \Big)\\
			{\rm s.t.} 
			\quad &  \mathbf x_i(t+1)= \mathbf A_i \mathbf x_i(t)+ \mathbf B_i \mathbf u_i(t),  \\
			\quad &  p_i(t) \leq a_i (t)- h_i(\mathbf u_i(t)),   \\
				\quad & \boldsymbol l_i(t) \leq \mathbf w_i^\ast(t)-\mathbf g_{it}( p_i(t)) \\
			\quad &	\sum_{i=1}^N p_i(t) = 0, \,\,\,	\sum_{i=1}^N \boldsymbol l_i(t) = 0, 
    \\
				\quad &
    \underline{\mathbf x}_i \leq \mathbf x_i(t) \leq \overline{\mathbf x}_i, 
   \,\,\, \underline{\mathbf u}_i\leq \mathbf u_i(t) \leq \overline{\mathbf u}_i, 
				\,\,\,   i \in \mathcal{N}, \,\,\, t \in \mathcal{T},
		\end{aligned}
	\end{equation}
where $\mathbf L=(\mathbf L_1^\top, \dots, \mathbf L_N^\top)^\top$ is the vector of all traded limits associated with all participants over the entire time horizon.

\begin{assumption}\label{assumption3}
 Slater's condition holds for \eqref{eq17} and \eqref{eq2}. 
\end{assumption}
\begin{theorem}\label{theorem3}
Let Assumptions \ref{assumption1} and \ref{assumption3} hold. Given feasible initial conditions $\mathbf x_i(0)$ for $i \in \mathcal{N}$, the following statements hold.
\begin{itemize}
    \item A competitive equilibrium in Definition \ref{def3} is equivalent to a social welfare maximization solution to \eqref{eq2}. 
    \item Let $-\alpha^\ast(t)$ and $-\boldsymbol \delta^\ast(t)$ be optimal dual variables corresponding to the balancing equality constraints $\sum_{i=1}^N p_i(t)=0$ and $\sum_{i=1}^N \boldsymbol l_i(t)=0$ in \eqref{eq2}, respectively, for $t\in \mathcal{T}$. Then,  uniform equilibrium prices can be obtained as $\lambda^\ast(t)=\alpha^\ast(t)$ and $\boldsymbol \beta^\ast(t)=\boldsymbol \delta^\ast(t)$ for $t\in \mathcal{T}$.
    \end{itemize}
\end{theorem}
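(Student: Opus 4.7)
The plan is to establish the equivalence via Lagrangian duality and KKT conditions, mirroring the argument that underlies Theorem \ref{theorem2} but enlarged to accommodate the additional limit-trading variables $\mathbf L_i$. Under Assumption \ref{assumption1}, the objective of \eqref{eq2} is concave. The constraints are all convex in $(\mathbf U, \mathbf P, \mathbf L)$: the state dynamics, state and input bounds, and the two balance constraints are affine; $-h_i$ is concave, so $p_i(t) \leq a_i(t) - h_i(\mathbf u_i(t))$ defines a convex set; and since $\mathbf g_{it}$ is affine, $\boldsymbol l_i(t) \leq \mathbf w_i^\ast(t)-\mathbf g_{it}(p_i(t))$ is also convex. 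Hence \eqref{eq2} is a convex program, and by Assumption \ref{assumption3} Slater's condition holds, so strong duality applies and the KKT conditions are both necessary and sufficient for optimality; the same is true for each individual problem \eqref{eq17}.

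First I would form the Lagrangian $\mathcal{L}$ of \eqref{eq2}, assigning multipliers $-\alpha(t)$ and $-\boldsymbol\delta(t)$ to the two balance equalities $\sum_i p_i(t) = 0$ and $\sum_i \boldsymbol l_i(t) = 0$, and prosumer-indexed multipliers to all remaining constraints. The key observation is that once these two coupling constraints are dualized, $\mathcal{L}$ decomposes as $\mathcal{L} = \sum_{i=1}^N \mathcal{L}_i$, where each $\mathcal{L}_i$ depends only on $(\mathbf U_i, \mathbf p_i, \mathbf L_i)$ and contains the terms $\alpha(t) p_i(t)$ and $\boldsymbol\delta(t)\cdot\boldsymbol l_i(t)$ arising from the dualization. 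This $\mathcal{L}_i$ coincides precisely with the Lagrangian of the prosumer problem \eqref{eq17} under the identification $\lambda(t) = \alpha(t)$ and $\boldsymbol\beta(t) = \boldsymbol\delta(t)$.

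From this decomposition I would argue both directions. For the forward direction, starting from a primal--dual optimal pair of \eqref{eq2} with multipliers $\alpha^\ast(t), \boldsymbol\delta^\ast(t)$ on the balance constraints, primal feasibility immediately yields condition (ii) of Definition \ref{def3}, and restricting the centralized KKT system to the variables of prosumer $i$ recovers exactly the KKT system of \eqref{eq17} at $\lambda^\ast = \alpha^\ast$, $\boldsymbol\beta^\ast = \boldsymbol\delta^\ast$; sufficiency of KKT for the convex subproblem then gives condition (i). Conversely, given a competitive equilibrium, each prosumer's KKT system together with the two balance conditions can be reassembled into the full KKT system of \eqref{eq2}; sufficiency in the convex setting then delivers a social welfare maximizer with $\alpha^\ast = \lambda^\ast$, $\boldsymbol\delta^\ast = \boldsymbol\beta^\ast$.

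The main subtlety I anticipate is the handling of the limit-trading inequality $\boldsymbol l_i(t) \leq \mathbf w_i^\ast(t) - \mathbf g_{it}(p_i(t))$, which couples $\boldsymbol l_i(t)$ to $p_i(t)$ \emph{within} a prosumer. Unlike the balance equalities, this inequality is not dualized across prosumers; it remains local to prosumer $i$ and its multiplier appears identically on both sides of the decomposition, but it does contribute a cross term through the affine map $\mathbf g_{it}$ to the stationarity condition in $p_i(t)$. The care needed is to ensure this cross contribution shows up consistently in both the centralized and individual KKT systems so that the two sets of stationarity conditions match term by term. Once this bookkeeping is in place, the proof proceeds exactly as for Theorem \ref{theorem2}, with the additional block of stationarity, feasibility, and complementarity conditions for $\mathbf L_i$ and $\boldsymbol\delta(t)$ handled in parallel with those for $\mathbf p_i$ and $\alpha(t)$.
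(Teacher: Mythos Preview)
Your proposal is correct and follows essentially the same route as the paper: both arguments dualize only the two coupling balance constraints, observe that the resulting Lagrangian of \eqref{eq2} separates as $\mathcal{L}=\sum_i \mathcal{L}_i$ with each $\mathcal{L}_i$ coinciding with the Lagrangian of \eqref{eq17} under $\lambda^\ast=\alpha^\ast$, $\boldsymbol\beta^\ast=\boldsymbol\delta^\ast$, and then invoke strong duality (you phrase it via KKT, the paper via the saddle-point characterization) to pass between the centralized optimum and the per-prosumer optima. The only cosmetic difference is that the paper dispatches the direction ``competitive equilibrium $\Rightarrow$ social optimum'' in one line by merging the prosumer problems, whereas you reassemble the individual KKT systems; the substance is identical.
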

\begin{proof}
    See Appendix \ref{Appendix_Theorem3}.
\end{proof}

\subsection{Connection Between Locational and Uniform Pricing}
In this section, we examine the connection between the competitive market with limit trading in DOEs, proposed in Section \ref{sec:main_limit}, and the competitive market with locational pricing introduced in Section \ref{sec:preliminary_locational}. For simplicity, we consider only voltage constraints in the grid to be consistent with the formulation provided in Section \ref{sec:preliminary_locational}.

\begin{theorem}\label{theorem4}
    Let Assumptions \ref{assumption1}, \ref{assumption1_2}, and \ref{assumption3} hold. Then, the competitive market with limit trading in DOEs, proposed in Section \ref{sec:main_limit}, is equivalent to the competitive market with locational pricing introduced in Section \ref{sec:preliminary_locational}. To be precise, if $(\mathbf U^\star, \mathbf P^\star)$ maximizes \eqref{eq_social_locational}, then there exists $\mathbf L^\star$ such that $(\mathbf U^\star, \mathbf P^\star, \mathbf L^\star)$ maximizes \eqref{eq2}, and vice versa.
\end{theorem}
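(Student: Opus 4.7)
The plan is to invoke the welfare--equilibrium equivalences already established and thereby reduce Theorem \ref{theorem4} to a direct comparison of the feasible sets of the two social welfare programs \eqref{eq_social_locational} and \eqref{eq2}. By the theorem of Section \ref{sec:preliminary_locational}, the locational competitive equilibrium of Definition \ref{def0} is equivalent to a maximizer of \eqref{eq_social_locational}, and by Theorem \ref{theorem3} the uniform--with--limit-trading equilibrium of Definition \ref{def3} is equivalent to a maximizer of \eqref{eq2}. The two programs share an identical objective that depends only on $(\mathbf U, \mathbf P)$ and not on $\mathbf L$, so it suffices to show that the $(\mathbf U, \mathbf P)$-projection of the feasible set of \eqref{eq2} coincides with the feasible set of \eqref{eq_social_locational}.

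For the forward inclusion, I would take any feasible triple $(\mathbf U, \mathbf P, \mathbf L)$ of \eqref{eq2} and sum the limit-trading constraint $\boldsymbol l_i(t)\leq \mathbf w_i^\ast(t) - \mathbf g_{it}(p_i(t))$ over $i\in\mathcal N$; using the limit balance $\sum_{i=1}^N \boldsymbol l_i(t)=0$ together with the DOE identity $\sum_{i=1}^N \mathbf w_i^\ast(t) = \boldsymbol \nu(t)$ from \eqref{eq:RHSD}, I immediately recover the aggregated grid bound $\sum_{i=1}^N \mathbf g_{it}(p_i(t)) \leq \boldsymbol \nu(t)$, which in the voltage-only setting assumed for this theorem is precisely the voltage constraint of \eqref{eq_social_locational}. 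All remaining constraints --- the prosumer dynamics, state/control bounds, the net-supply bound $p_i(t)\leq a_i(t)-h_i(\mathbf u_i(t))$, and the power balance $\sum_i p_i(t)=0$ --- coincide verbatim between the two programs.

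For the reverse inclusion, given any feasible $(\mathbf U, \mathbf P)$ for \eqref{eq_social_locational}, I would exhibit an admissible $\mathbf L$ by the explicit equal-share redistribution of aggregate slack
\[
\boldsymbol l_i(t) \;=\; \bigl(\mathbf w_i^\ast(t) - \mathbf g_{it}(p_i(t))\bigr) \;-\; \tfrac{1}{N}\Bigl(\boldsymbol \nu(t) - \sum_{k=1}^{N} \mathbf g_{kt}(p_k(t))\Bigr).
\]
Componentwise, $\boldsymbol l_i(t)\leq \mathbf w_i^\ast(t)-\mathbf g_{it}(p_i(t))$ follows from the fact that the aggregate slack $\boldsymbol \nu(t) - \sum_k \mathbf g_{kt}(p_k(t))$ is non-negative (a restatement of the voltage bound in \eqref{eq_social_locational}), and summing over $i$ yields $\sum_{i=1}^N \boldsymbol l_i(t)=0$ together with $\sum_i \mathbf w_i^\ast(t)=\boldsymbol\nu(t)$. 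Thus $(\mathbf U, \mathbf P, \mathbf L)$ is feasible for \eqref{eq2} with the same objective value, and combined with the forward inclusion this gives the claimed one-to-one correspondence of maximizers.

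I do not anticipate a substantive obstacle. The crux is simply that the DOE decomposition \eqref{eq:RHSD} is tight by construction, so the summed limit-trading slacks reproduce the original grid constraint with no gap, and the rest is a constructive feasibility redistribution. The only subtle point is to verify that the voltage bound \eqref{eq_voltage} is cast in the separable form \eqref{eq:constraints} so that aggregating $\mathbf g_{it}(p_i(t))\leq \mathbf w_i^\ast(t)$ across $i$ yields exactly the voltage constraint used in \eqref{eq_social_locational}; this is already recorded in the preliminaries. If one wishes to bypass the welfare reformulation and match equilibria directly, the price identification would follow from the KKT systems of \eqref{eq_social_locational} and \eqref{eq2}, with the dual of the summed limit-balance constraint recovering the voltage multipliers $\underline\xi_k^\ast(t)-\overline\xi_k^\ast(t)$ of Definition \ref{def0}(iv).
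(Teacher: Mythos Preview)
Your proposal is correct and takes essentially the same approach as the paper: the paper's proof also rests on showing that the $(\mathbf U,\mathbf P)$-projection of the feasible set of \eqref{eq2} coincides with that of \eqref{eq_social_locational}, using exactly your equal-share slack redistribution $\boldsymbol l_i(t)=\mathbf w_i^\ast(t)-\mathbf g_{it}(p_i(t))+\tfrac{1}{N}\bigl(\sum_k \mathbf g_{kt}(p_k(t))-\boldsymbol\nu(t)\bigr)$ for one direction and the summation argument for the other. The only cosmetic difference is that the paper wraps each direction in a contradiction argument rather than stating the feasible-set equivalence directly as you do.
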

\begin{proof}
   See Appendix \ref{Appendix_Theorem4}. 
\end{proof}

The next theorem shows that  optimal uniform prices proposed in Section \ref{sec:main_limit} are part of  optimal locational prices introduced in Section \ref{sec:preliminary_locational}. From Definition \ref{def0}, the locational prices are given by $\lambda_i^\ast(t)=\alpha^\ast(t)+\sum_{k=1}^{N} \big[\underline\xi_k^\ast(t)-\overline\xi_k^\ast (t) \big] R_{ki}$ for $i \in \mathcal{N}$ and $ t \in \mathcal{T}$. For simplicity, define $\alpha^\ast_i(t)=\sum_{k=1}^{N} \big[\underline\xi_k^\ast(t)-\overline\xi_k^\ast (t) \big] R_{ki}$, which represents the second part of the locational prices. Thus, $\lambda_i^\ast(t)=\alpha^\ast(t)+\alpha^\ast_i(t)$.

\begin{theorem}\label{theorem5}
     Let Assumptions \ref{assumption1}, \ref{assumption1_2}, and \ref{assumption3} hold. Consider optimal locational prices in Section \ref{sec:preliminary_locational} which are defined as $\lambda_i^\ast(t)=\alpha^\ast(t)+\alpha^\ast_i(t)$. Then, optimal uniform  prices $\lambda^\ast(t)$ obtained in the competitive market approach with limit trading in DOEs, proposed in Section \ref{sec:main_limit}, satisfy  $\lambda^\ast(t)=\alpha^\ast(t)$.
\end{theorem}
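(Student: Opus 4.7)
The plan is to apply KKT optimality conditions to the two social welfare maximization problems \eqref{eq_social_locational} and \eqref{eq2} and to match the dual variable associated with the balance constraint $\sum_i p_i(t)=0$. Under Assumptions~\ref{assumption1}, \ref{assumption1_2}, and \ref{assumption3}, Slater's condition holds and both problems admit zero duality gap, so KKT yields a complete characterization of primal--dual optimality. By Theorem~\ref{theorem4}, I first fix a common primal optimum $(\mathbf U^\star,\mathbf P^\star)$ together with a companion $\mathbf L^\star$ such that $(\mathbf U^\star,\mathbf P^\star,\mathbf L^\star)$ solves \eqref{eq2}.

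Next I would write out the KKT stationarity conditions for \eqref{eq2}. Stationarity in $\boldsymbol l_i(t)$ involves only the DOE multiplier $\boldsymbol\mu_i(t)\ge 0$ and the limit-balance multiplier $-\boldsymbol\delta(t)$, and yields $\boldsymbol\mu_i^\ast(t)=\boldsymbol\delta^\ast(t)$ for every $i\in\mathcal N$; in particular $\boldsymbol\delta^\ast(t)\ge 0$. Stationarity in $p_i(t)$ then gives
\[
\eta_i^\ast(t) \;=\; \alpha^\ast(t) \;-\; \boldsymbol\delta^\ast(t)\cdot\nabla_{p_i}\mathbf g_{it}\bigl(p_i^\star(t)\bigr),
\]
where $\eta_i^\ast(t)\ge 0$ is the multiplier on $p_i(t)\le a_i(t)-h_i(\mathbf u_i(t))$ and $\alpha^\ast(t)$ is the uniform-price dual appearing in Theorem~\ref{theorem3}. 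Specialising $\mathbf g_{it}$ to encode the voltage inequalities in \eqref{eq_voltage}, the vector $\boldsymbol\delta^\ast(t)$ decomposes into upper/lower components $(\overline\delta_k^\ast(t),\underline\delta_k^\ast(t))_{k=1}^N$, so that the above stationarity takes the form
\[
\eta_i^\ast(t) \;=\; \alpha^\ast(t) \;+\; \sum_{k=1}^{N}\bigl[\underline\delta_k^\ast(t)-\overline\delta_k^\ast(t)\bigr]R_{ki}.
\]
This is structurally identical to the KKT stationarity for \eqref{eq_social_locational}, namely $\tilde\eta_i^\ast(t)=\tilde\alpha^\ast(t)+\sum_k[\underline\xi_k^\ast(t)-\overline\xi_k^\ast(t)]R_{ki}$.

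I then identify $\tilde\alpha^\ast(t)=\alpha^\ast(t)$, $\overline\xi_k^\ast(t)=\overline\delta_k^\ast(t)$, $\underline\xi_k^\ast(t)=\underline\delta_k^\ast(t)$, and $\tilde\eta_i^\ast(t)=\eta_i^\ast(t)$, and verify the remaining KKT conditions for \eqref{eq_social_locational}. Dual feasibility is immediate from $\boldsymbol\delta^\ast\ge 0$; primal feasibility of $(\mathbf U^\star,\mathbf P^\star)$ in \eqref{eq_social_locational} is supplied by Theorem~\ref{theorem4}; stationarity holds by construction. For complementary slackness on the voltage constraints, I start from the per-prosumer DOE condition $\boldsymbol\mu_i^\ast(t)\cdot(\boldsymbol l_i^\star(t)+\mathbf g_{it}(p_i^\star(t))-\mathbf w_i^\ast(t))=0$, substitute $\boldsymbol\mu_i^\ast(t)=\boldsymbol\delta^\ast(t)$, sum over $i$, and use $\sum_i\boldsymbol l_i^\star(t)=0$ together with the RHSD identity $\sum_i\mathbf w_i^\ast(t)=\boldsymbol\nu(t)$ from \eqref{eq:RHSD}; componentwise nonnegativity of $\boldsymbol\delta^\ast$ and nonpositivity of the slack $\sum_i\mathbf g_{it}(p_i^\star)-\boldsymbol\nu$ then force each component to vanish, which under the voltage specialisation are exactly $\overline\xi_k^\ast(t)(v_0+\sum_i R_{ki}p_i^\star(t)-\overline v_k)=0$ and $\underline\xi_k^\ast(t)(\underline v_k-v_0-\sum_i R_{ki}p_i^\star(t))=0$. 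Thus $(\alpha^\ast(t),\overline\xi^\ast(t),\underline\xi^\ast(t),\eta^\ast(t))$ is a valid KKT multiplier set for \eqref{eq_social_locational}, so $\alpha^\ast(t)$ coincides with the balance dual of the locational problem. Combined with Theorem~\ref{theorem3}, which supplies $\lambda^\ast(t)=\alpha^\ast(t)$ in the limit-trading market, this yields the claim.

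I expect the main obstacle to be the bookkeeping that converts the $N$ per-prosumer DOE complementary slackness conditions in \eqref{eq2} into the single global voltage complementary slackness in \eqref{eq_social_locational}. The essential mechanism is that stationarity in $\boldsymbol l_i$ forces $\boldsymbol\mu_i^\ast$ to be independent of $i$, so a summation argument is possible, and the RHSD identity together with $\sum_i\boldsymbol l_i^\star=0$ then collapses the per-prosumer slacks into the global voltage slack; keeping the upper/lower-bound indexing of $\boldsymbol\delta^\ast$ straight is the main source of notational care.
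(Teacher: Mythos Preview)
Your proposal is correct but proceeds in the opposite direction from the paper. The paper starts from the locational problem \eqref{eq_social_locational}: it writes the separable Lagrangian \eqref{eq5}--\eqref{eq9}, introduces slack variables $\mathbf S_i$ to obtain the auxiliary Lagrangian $\tilde{\mathcal L}_i$ in \eqref{eq10}, and then identifies $\tilde{\mathcal L}_i$ with the Lagrangian of the prosumer problem \eqref{eq17} under the substitutions \eqref{eq11}; strong duality then shows that the locational balance dual $\alpha^\ast(t)$ is a valid uniform price $\lambda^\ast(t)$ for the limit-trading market. You instead start from the KKT system of \eqref{eq2}, exploit the $\boldsymbol l_i$-stationarity to force $\boldsymbol\mu_i^\ast=\boldsymbol\delta^\ast$ for every $i$, and then transport the resulting multipliers back to \eqref{eq_social_locational}, with the summation/RHSD argument recovering the global voltage complementary slackness from the per-prosumer DOE slackness. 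Your route is a clean KKT-verification argument and makes the key mechanism (uniformity of the DOE multipliers across prosumers) very explicit; the paper's slack-variable Lagrangian transformation is more compact and simultaneously delivers the identification $\boldsymbol\beta^\ast(t)=(\overline{\boldsymbol\xi}^\ast(t),\underline{\boldsymbol\xi}^\ast(t))$ without a separate complementary-slackness step. Both arguments rely on strong duality and on Theorem~\ref{theorem4} to share a common primal optimum, and both establish the result at the level of ``there exist optimal duals with this property'' rather than uniqueness.
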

\begin{proof}
See Appendix \ref{Appendix_Theorem5}.
\end{proof}

\begin{definition}[as in \cite{li2020transactive, ma2020incentive}]
A mechanism is strongly budget balanced (or weakly budget balanced) if the summation of all payments collected by the system coordinator is zero (or nonegative).
\end{definition}

\begin{proposition}\label{proposition_2}
The following statements hold. 
\begin{itemize}
    \item[(i)] The competitive market with locational pricing in Section \ref{sec:preliminary_locational} is weakly budget balanced under the competitive equilibria, i.e.,  $-\sum_{i=1}^N \lambda_i^\ast(t)p_i^\ast(t) \geq 0$ for $t \in \mathcal{T}$.
    \item[(ii)] In contrast, the proposed uniform pricing approaches guarantee that the market is strongly budget balanced under the competitive equilibria, i.e., $\sum_{i=1}^N \lambda^\ast(t)p_i^\ast(t)=0$ and $\sum_{i=1}^N \boldsymbol  \beta^\ast(t) \cdot \boldsymbol l_i^\ast(t)=0$ for $t \in \mathcal{T}$.
\end{itemize}
\end{proposition}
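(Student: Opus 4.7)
The proof splits naturally along the two bullets, with (ii) being essentially a one-line consequence of the balancing constraints and (i) requiring a short computation that leverages complementary slackness of the voltage duals.

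\textbf{Part (ii).} The key observation is that the uniform price $\lambda^\ast(t)$ does not depend on $i$, so it factors out of the sum over prosumers. The plan is to write
\begin{equation*}
\sum_{i=1}^N \lambda^\ast(t) p_i^\ast(t) \;=\; \lambda^\ast(t) \sum_{i=1}^N p_i^\ast(t) \;=\; 0,
\end{equation*}
where the final equality is precisely condition (ii) in Definition \ref{def2} (equivalently the power-balance equality in \eqref{eq_social_2}). The identical argument applied componentwise to $\boldsymbol\beta^\ast(t)\in\mathbb R^M$ and the limit-balance constraint $\sum_i \boldsymbol l_i^\ast(t)=0$ from Definition \ref{def3} gives $\sum_i \boldsymbol\beta^\ast(t)\cdot\boldsymbol l_i^\ast(t)=0$. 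No further machinery is needed here.

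\textbf{Part (i).} The first step is to substitute the locational-price decomposition from Definition \ref{def0}(iv), $\lambda_i^\ast(t)=\alpha^\ast(t)+\sum_{k=1}^N[\underline\xi_k^\ast(t)-\overline\xi_k^\ast(t)]R_{ki}$, into $\sum_i \lambda_i^\ast(t)p_i^\ast(t)$ and swap the order of summation. The $\alpha^\ast(t)$ term drops out by power balance (Definition \ref{def0}(ii)), and what remains is
\begin{equation*}
\sum_{k=1}^N \bigl[\underline\xi_k^\ast(t)-\overline\xi_k^\ast(t)\bigr] \sum_{i=1}^N R_{ki} p_i^\ast(t) \;=\; \sum_{k=1}^N \bigl[\underline\xi_k^\ast(t)-\overline\xi_k^\ast(t)\bigr]\bigl(v_k^\ast(t)-v_0\bigr),
\end{equation*}
where the inner sum was recognized as $v_k^\ast(t)-v_0$ via the LinDistFlow voltage identity. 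Next I would apply the complementary slackness conditions from Definition \ref{def0}(v): wherever $\underline\xi_k^\ast(t)>0$ the lower voltage bound binds so that $v_k^\ast(t)=\underline v_k$, and wherever $\overline\xi_k^\ast(t)>0$ the upper bound binds so that $v_k^\ast(t)=\overline v_k$. This allows me to replace $v_k^\ast(t)$ by $\underline v_k$ (resp.\ $\overline v_k$) inside each summand, producing
\begin{equation*}
\sum_{i=1}^N \lambda_i^\ast(t)p_i^\ast(t) \;=\; \sum_{k=1}^N \underline\xi_k^\ast(t)\bigl(\underline v_k-v_0\bigr) \;-\; \sum_{k=1}^N \overline\xi_k^\ast(t)\bigl(\overline v_k-v_0\bigr).
\end{equation*}
Finally, using nonnegativity of the dual variables $\underline\xi_k^\ast(t),\overline\xi_k^\ast(t)\geq 0$ together with the physical fact that the feeder voltage lies inside the admissible band, $\underline v_k\leq v_0\leq \overline v_k$ (which makes the first sum nonpositive and the subtracted sum nonnegative), the right-hand side is nonpositive. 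Negating gives $-\sum_i\lambda_i^\ast(t)p_i^\ast(t)\geq 0$.

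\textbf{Main obstacle.} Part (ii) is essentially immediate. The only substantive work is in part (i), and the principal subtlety is the final sign argument: one must justify that $\underline v_k\le v_0\le\overline v_k$, which is a mild standing assumption on the voltage envelope (the feeder voltage is within the per-node admissible band, a universally adopted convention in distribution-grid voltage regulation) rather than a consequence of the optimization itself. I would flag this explicitly, and otherwise the remainder is a careful but mechanical use of Definition \ref{def0}(ii)--(v).
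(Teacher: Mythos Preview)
Your proposal is correct and follows essentially the same route as the paper: for (ii) you factor out the uniform prices and invoke the balancing constraints, and for (i) you substitute the locational-price decomposition, kill the $\alpha^\ast(t)$ term via power balance, apply complementary slackness to replace $v_k^\ast(t)$ by the binding bound, and conclude by nonnegativity of the duals together with $\underline v_k\le v_0\le\overline v_k$. The paper packages the same computation slightly differently (invoking the separable Lagrangian identity rather than plugging in the price formula directly), but the argument and the flagged standing assumption on $v_0$ are identical.
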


\begin{proof}
See Appendix \ref{Appendix_Proposition_2}.
\end{proof}

Consider the competitive market with locational pricing in Section \ref{sec:preliminary_locational}. If the market operator collects all payments, which are not zero in general, and distributes the associated budget surplus $-\sum_{i=1}^N \lambda_i^\ast(t)p_i^\ast(t)$ to prosumers equally, it is equivalent to  assigning the same DOE limits $\mathbf w_i^\ast(t)=\boldsymbol \nu (t) /N$ to all prosumers in  the design of the competitive market with limit trading in DOEs introduced in Section \ref{sec:main_limit}. In other words, if the system operator assigns different DOE limits $\mathbf w_i^\ast(t)$ to prosumers in  Section \ref{sec:main_limit}, it is equivalent to distributing the budget surplus $-\sum_{i=1}^N \lambda_i^\ast(t)p_i^\ast(t)$, obtained from the locational pricing approach, to prosumers in a nonuniform manner. This result is proved in the following proposition.

\begin{proposition}\label{proposition1}
    Consider the competitive market with locational pricing in Section \ref{sec:preliminary_locational}. Distributing the budget surplus $-\sum_{i=1}^N \lambda_i^\ast(t)p_i^\ast(t)$ equally to  prosumers is equivalent to assigning prosumers the same DOE limits $\mathbf w_i^\ast(t)=\boldsymbol \nu (t) /N$ in  Section \ref{sec:main_limit}, which considers uniform pricing with limit trading in DOEs. 
\end{proposition}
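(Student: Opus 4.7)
The plan is to reformulate the claimed equivalence as an identity between the net per-prosumer monetary incomes that arise under the two schemes, and then to verify that identity by combining Theorems \ref{theorem4} and \ref{theorem5} with the KKT conditions of \eqref{eq2} specialized to the equal-DOE case $\mathbf w_i^\ast(t) = \boldsymbol \nu(t)/N$. Concretely, under locational pricing with the budget surplus equally redistributed, prosumer $i$ receives at time $t$ the net transfer $\lambda_i^\ast(t) p_i^\ast(t) - \tfrac{1}{N}\sum_{j=1}^N \lambda_j^\ast(t) p_j^\ast(t)$; under the uniform pricing market with limit trading (Section \ref{sec:main_limit}) with equal DOE limits, prosumer $i$ receives $\lambda^\ast(t) p_i^\ast(t) + \boldsymbol \beta^\ast(t) \cdot \boldsymbol l_i^\ast(t)$. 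By Theorem \ref{theorem4}, the optimal $(\mathbf U^\star, \mathbf P^\star)$ are common to both problems, so utilities agree and the proof reduces to showing the two net transfers coincide for each $i$ and $t$.

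First, I would invoke Theorem \ref{theorem5} to replace $\lambda^\ast(t)$ by $\alpha^\ast(t)$ and write $\lambda_i^\ast(t) = \alpha^\ast(t) + \alpha_i^\ast(t)$ with $\alpha_i^\ast(t) = \sum_k (\underline\xi_k^\ast(t) - \overline\xi_k^\ast(t)) R_{ki}$. Using the power balance $\sum_j p_j^\ast(t) = 0$, the $\alpha^\ast$-part in the locational transfer collapses and the target equality reduces to
\begin{equation*}
\alpha_i^\ast(t) p_i^\ast(t) - \tfrac{1}{N}\sum_{j=1}^N \alpha_j^\ast(t) p_j^\ast(t) = \boldsymbol \beta^\ast(t) \cdot \boldsymbol l_i^\ast(t).
\end{equation*}
So the core task is to compute $\boldsymbol \beta^\ast(t) \cdot \boldsymbol l_i^\ast(t)$ at the equal-DOE social optimum and to match it against this left-hand side.

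To identify $\boldsymbol \beta^\ast(t)$ and $\boldsymbol l_i^\ast(t)$, I would write down the KKT conditions of \eqref{eq2} with $\mathbf w_i^\ast(t) = \boldsymbol \nu(t)/N$. Stationarity in $\boldsymbol l_i(t)$ pins the dual $\boldsymbol \mu_i^\ast(t)$ of the local DOE constraint $\boldsymbol l_i(t) \leq \mathbf w_i^\ast(t) - \mathbf g_{it}(p_i(t))$ to equal $\boldsymbol \beta^\ast(t)$ independently of $i$. Eliminating $\boldsymbol l_i$ recovers the locational-pricing problem \eqref{eq_social_locational} (as in the proof of Theorem \ref{theorem4}), so this common $\boldsymbol \mu^\ast(t)$ must be the dual of the original grid constraint $\sum_i \mathbf g_{it}(p_i) \leq \boldsymbol \nu(t)$; in the voltage-only setting this means $\boldsymbol \beta^\ast(t)$ stacks the multipliers $\overline\xi_k^\ast(t)$ and $\underline\xi_k^\ast(t)$. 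Complementary slackness on any binding component forces $\boldsymbol l_i^\ast(t) = \boldsymbol \nu(t)/N - \mathbf g_{it}(p_i^\ast(t))$ in that coordinate, while in non-binding coordinates $\boldsymbol \beta^\ast(t)$ vanishes and those coordinates drop out of the inner product.

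Finally, I would expand $\boldsymbol \beta^\ast(t) \cdot \boldsymbol l_i^\ast(t)$ component-wise: in any upper-voltage-binding direction at node $k$ the contribution is $\overline\xi_k^\ast(t)\big((\overline v_k - v_0)/N - R_{ki} p_i^\ast(t)\big)$, with an analogous term involving $\underline\xi_k^\ast(t)$ and $v_0 - \underline v_k$ for lower-voltage-binding directions. Using that binding constraints satisfy $\sum_j R_{kj} p_j^\ast(t) = \overline v_k - v_0$ (respectively $\underline v_k - v_0$), together with $\underline\xi_k^\ast \overline\xi_k^\ast = 0$ from complementary slackness, and expanding $\sum_j \alpha_j^\ast(t) p_j^\ast(t)$ via the definition of $\alpha_j^\ast$, the bookkeeping collapses term by term to $\alpha_i^\ast(t) p_i^\ast(t) - \tfrac{1}{N}\sum_j \alpha_j^\ast(t) p_j^\ast(t)$. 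The main obstacle is exactly this last step: the careful sign- and index-tracking, and the verification (via $\partial L/\partial \boldsymbol l_i(t) = \mathbf 0$) that $\boldsymbol \mu_i^\ast(t)$ is common across $i$ and coincides with the voltage-constraint multipliers of the locational problem; without the equal-DOE assumption the per-prosumer duals need not align in this way, which is precisely why the equivalence is special to $\mathbf w_i^\ast(t) = \boldsymbol \nu(t)/N$.
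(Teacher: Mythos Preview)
Your proposal is correct and arrives at the same identity, but it is organized differently from the paper's proof. The paper reuses the Lagrangian identification established in the proof of Theorem~\ref{theorem5} (equations \eqref{eq10}--\eqref{eq11}): since $\boldsymbol\beta^\ast\cdot\mathbf s_i^\ast=0$ by complementary slackness on the local slack, the limit-trading income $\lambda^\ast p_i^\ast+\boldsymbol\beta^\ast\cdot\boldsymbol l_i^\ast$ collapses to $\lambda_i^\ast p_i^\ast+\boldsymbol\beta^\ast\cdot\mathbf w_i^\ast$, i.e., the locational income plus an ``additional payment'' $\sum_k[\underline\xi_k^\ast\underline w_{ki}^\ast+\overline\xi_k^\ast\overline w_{ki}^\ast]$. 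Matching this against the per-capita surplus $\tfrac{1}{N}\sum_k[\underline\xi_k^\ast(v_0-\underline v_k)+\overline\xi_k^\ast(\overline v_k-v_0)]$ (taken from \eqref{eq21_1}) immediately forces $\mathbf w_i^\ast=\boldsymbol\nu(t)/N$. You instead expand $\boldsymbol\beta^\ast\cdot\boldsymbol l_i^\ast$ directly, using complementary slackness on the \emph{global} voltage constraint to substitute $\sum_j R_{kj}p_j^\ast=\overline v_k-v_0$ (or $\underline v_k-v_0$) in binding directions, and then match against $\alpha_i^\ast p_i^\ast-\tfrac{1}{N}\sum_j\alpha_j^\ast p_j^\ast$. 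Your route is more self-contained (it treats Theorems~\ref{theorem4}--\ref{theorem5} as black boxes rather than reopening their Lagrangians) but requires more index bookkeeping; the paper's route is shorter because the hard work of identifying $\boldsymbol\beta^\ast$ with $(\overline\xi_k^\ast,\underline\xi_k^\ast)$ and absorbing $-\boldsymbol\beta^\ast\cdot\mathbf g_{it}(p_i)$ into $\alpha_i^\ast p_i$ was already done in Appendix~\ref{Appendix_Theorem5}.
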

\begin{proof}
    See Appendix \ref{Appendix_Proposition1}.
\end{proof}


\section{Numerical Results}\label{sec:Numerical Results}
In this section, we examine the proposed approaches on a modified  IEEE 13-node test feeder as depicted in Fig. \ref{fig_feeder}. We consider a single-phase representation that omits the transformer between nodes 2 and 3, the switch between nodes 6 and 7, and all capacitor banks \cite{nimalsiri2021coordinated}. All simulations are done in Python 3.8 + CVXPY on a MacBook Pro with an Apple M1 chip and 16-GB memory.

\begin{figure}[!t]
	\centering
	\includegraphics[width=3.35 in]{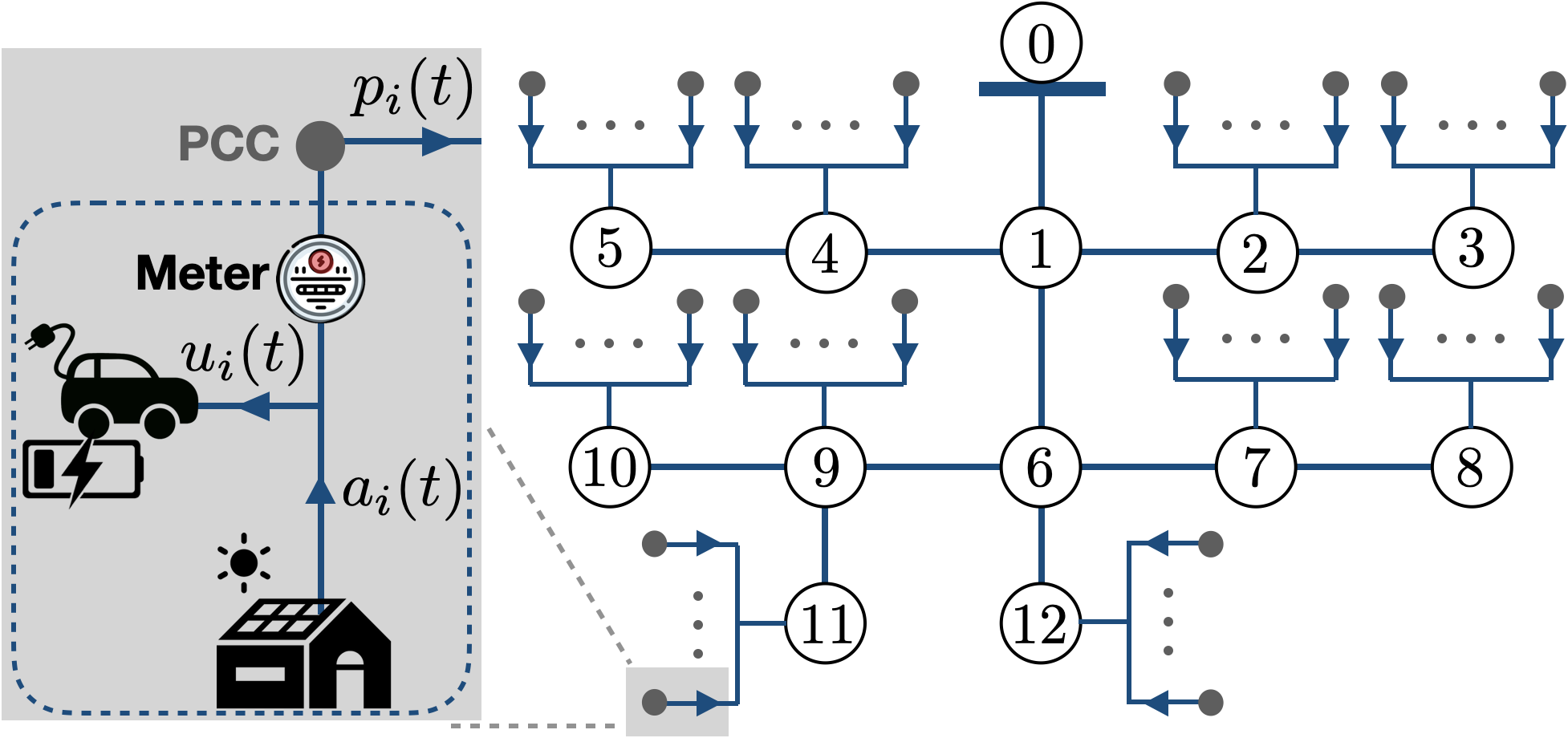}
	\caption{Modified IEEE 13-node test feeder. Left: A prosumer $i$ is zoomed in, connected to the electric grid at the point of common coupling (PCC).}
	\label{fig_feeder}
\end{figure}

\subsection{Simulation Data}
Each node---except for the feeder node, node $1$, and node $6$---represents a combination of $30$ distribution aggregators. Each aggregator manages $12$ residential prosumers equipped with gridable EVs and residential battery storage systems. The net supply of each prosumer, defined as the difference between their solar PV generation and uncontrollable loads, is based on metering data collected by Ausgrid \cite{Ausgrid1}, an electricity distribution company in Australia. This data is scaled by a factor of $2$ to account for the increase in solar PV generation and electrification compared to the year 2012, when the data was collected.

Prosumers have various EV models, as detailed in \cite{Battery1}, with battery capacities ranging from $0$ to $75$ (kWh). 
They also have home battery storage systems with capacities in the same range. Both EVs and home battery storage systems are assumed to have charge/discharge efficiencies of $\eta=0.9$. We study the system over a $24$ hour (h) time period starting from midnight with a sampling time of $\Delta=0.5$ (h). The dynamics of each aggregator $i$, which accounts for both EVs and home battery storage systems of prosumers, is modeled using the linear difference equation $\mathbf x_i(t+1)=\mathbf x_i(t)+\eta \mathbf u_i(t)\Delta$, where $\mathbf x_i(t) \in \mathbb R^2$ represents the SoC vector of the aggregated EVs and residential battery storage systems (kWh), and $\mathbf u_i(t) \in \mathbb R^2$ denotes the aggregated charge/discharge rate vector at each time step $t$. Note that, in all two-dimensional vectors, the first element corresponds to the aggregated EVs, while the second element corresponds to the aggregated residential batteries associated with the prosumers of aggregator $i$. The initial SoC of  batteries is assumed to be uniformly distributed in $[0.2, 0.5]$ times the associated battery capacities. 

While the home battery storage systems are always connected to the grid, EVs arrival and departure times are selected based on survey data collected by the Victorian Department of Transport in Australia \cite{VISTA1, Mendeley1}. It is assumed that  prosumers are equipped with  level-$2$ chargers with a maximum charge rate of $6.6$ (kW) \cite{nimalsiri2021coordinated}. In addition, the maximum discharge rate is assumed to be $-6.6$ (kW). To extend battery lifespan, the SoC of EV batteries must remain between $20\%$ and $85\%$ of their capacity between arrival and departure times. Similarly, home battery storage systems must maintain their SoC within the same range. Thus, the aggregated SoC and charge/discharge rates associated with each aggregator $i$ are constrained by $0.2 \mathbf C_i\leq \mathbf x_i(t)\leq 0.85 \mathbf C_i$  and $-6.6 \mathbb 1 \leq \mathbf u_i(t) \leq 6.6 \mathbb 1$, respectively, where  $\mathbf C_i \in \mathbb R^2$ is the vector of aggregated battery capacities (kWh) and $\mathbb 1 \in \mathbb R^2$ denotes a vector with all entries equal to $1$. The first element of $\mathbf C_i$ corresponds to the aggregated EVs, while the second element corresponds to the aggregated residential batteries associated with the prosumers of aggregator $i$. 

\subsection{Experiments}
In the prescribed setting, the $N=300$ aggregators are decision makers who model the average behavior of their associated prosumers. For each aggregator $i \in \mathcal{N}$, the utility functions are quadratic such that $f_i(\mathbf x_i(t), \mathbf u_i(t))=- \vartheta_1\| \mathbf u_i(t)\|^2- \vartheta_{2,t} \|(\mathbf x_i(t)-0.85 \mathbf C_i) \cdot (1,0)^\top\|^2 - \vartheta_{3,t} \|(\mathbf x_i(t)-0.85 \mathbf C_i) \cdot (0,1)^\top\|^2$ for $t \in \mathcal{T}=\{0, \dots, 47\}$, where $\|\cdot\|$ is the Euclidean norm. The terminal utility is assigned to be $\phi_i(\mathbf x_i(T))=-\vartheta_4\|\mathbf x_i(T)-0.85 \mathbf C_i\|^2$,  where  $T=48$ is the terminal time step. The coefficients  $\vartheta_1$ ($10^{-8}$\textcent/(kW)$^2$), $\vartheta_{2,t}$ ($10^{-8}$\textcent/(kWh)$^2$), $\vartheta_{3,t}$ ($10^{-8}$\textcent/(kWh)$^2$), and $\vartheta_4$ ($10^{-8}$\textcent/(kWh)$^2$) are the regularization terms. In the context of EV and battery charging, the consumed or withdrawn power by the $i$-th aggregated controllable load is $h_i(\mathbf u_i(t))=\mathbf u_i(t) \cdot \mathbb 1$. The reference voltage at the feeder is $V_0=1$ (p.u.). According to ANSI C84.1 standard, the voltage magnitude at each node must remain within $\pm5 \% $ of the nominal voltage. 
In other words,  the lower and upper bounds on the squared voltage magnitude are $\underline {v}_i=0.95^2$ (p.u.)$^2$ and $\overline {v}_i=1.05^2$ (p.u.)$^2$, respectively. 
\subsubsection{Uniform Pricing}
 We consider voltage constraints in the grid and obtain DOEs according to \eqref{eq:DOE_2}, where $O(\mathbf w(t))$ represents close-to-equal contributions as in \eqref{eq:close_to_equal}. 
 We suppose the reactive power injection is negligible, i.e., $q_i(t)=0$ for $i \in \mathcal{N}, t \in \mathcal{T}$. We apply the competitive equilibrium in Definition \ref{def3} that allows limit trading in DOEs. Solving \eqref{eq2}, the coordinator obtains uniform energy prices $\lambda^\ast(t)$ (in $10^{-8}$\textcent/kW) and limit trading prices $\boldsymbol\beta^\ast(t)$ (in $10^{-8}$\textcent/(kV)$^2$) as the optimal dual variables corresponding to the balancing equality constraints $\sum_{i=1}^N p_i(t) = 0$ and $\sum_{i=1}^N \boldsymbol l_i(t) = 0$, respectively, which are then broadcast to aggregators. Aggregators then decide about the EV and home battery charging profiles of their associated prosumers as well as their traded energy and DOE limit to maximize their payoff in \eqref{eq17}.  Limit prices $\boldsymbol \beta^\ast(t)$ are vectors of dimension 24, where each element represents either the lower or upper bound constraint on the voltage at one of the 12 nodes in the grid, excluding the feeder node. We denote the $k$-th element of  $\boldsymbol \beta^\ast(t)$ at time step $t$ as $\beta^\ast_{k}(t)$, where $k \in \{1, \dots, 24\}$. 

In Figs. \ref{fig_price_uniform_1} and \ref{fig_price_uniform_2}, we present uniform prices $\lambda^\ast(t)$ and $\boldsymbol\beta^\ast(t)$, respectively. For clarity, the price units have been converted to \textcent/kWh and \textcent/(kV)$^2$  in these figures. Aggregators at the same node share the same voltage. We denote the voltage at node $j$ and time step $t$ as $\vee_j(t)$, where $j \in \{1, \dots, 12\}$. For illustration, Fig. \ref{fig_voltage_1} shows the nodal voltages at all 12 nodes. The horizental dashed lines represent the lower and upper bounds on the voltage. As can be seen, we have uniform prices in the grid although nodal voltages hit the boundaries at some time steps. The net payment collected by the grid coordinator is observed to be zero.
\begin{figure}[!t]
	\centering
	\includegraphics[width=3 in]{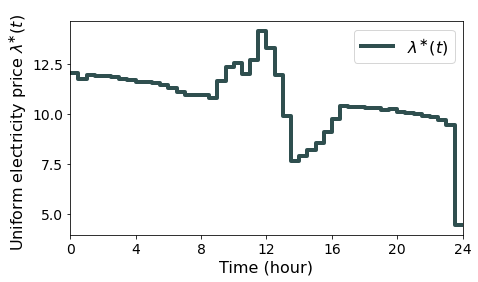}
	\caption{Uniform prices $\lambda^\ast(t)$ (\textcent/kWh) for unit electricity exchange over a 24-hour period (48 time steps of length $0.5$ hour) in the uniform pricing approach with DOE limit trading. All prosumers observe a uniform $\lambda^\ast(t)$.}
	\label{fig_price_uniform_1}
\end{figure}

\begin{figure}[!t]
	\centering
	\includegraphics[width=3 in]{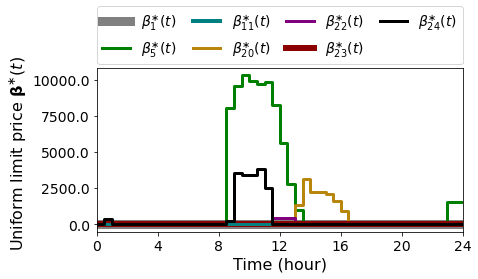}
	\caption{Uniform limit prices  $\boldsymbol\beta^\ast(t)$ (in \textcent/(kV)$^2$) for unit DOE limit exchange over a 24-hour period (48 time steps of length $0.5$ hour).  All prosumers observe a uniform vector of limit prices $\boldsymbol\beta^\ast(t)$. The $k$-th element of  vector $\boldsymbol \beta^\ast(t)$ is denoted by $\beta^\ast_{k}(t)$. The other $17$ elements of the vector $\boldsymbol \beta^\ast(t)$ are close to zero, as their associated element of DOE limit is in surplus, and not depicted in the figure.}
	\label{fig_price_uniform_2}
\end{figure}

\begin{figure}[!t]
	\centering
	\includegraphics[width=3 in]{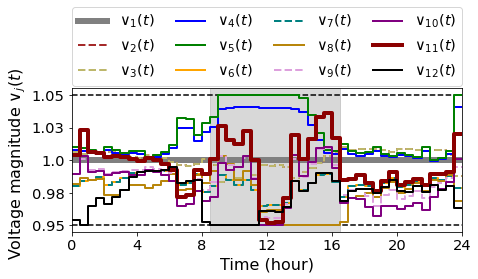}
	\caption{Nodal voltages $\vee_j(t)$ (p.u.) at each node $j$ over  a 24-hour period (48 time steps of length $0.5$ hour) in both the uniform pricing approach with DOE limit trading  and the locational pricing approach. The horizontal dashed lines represent the lower and upper bounds on the voltage.}
	\label{fig_voltage_1}
\end{figure}

\begin{remark}
    This example is infeasible to be solved using the uniform pricing approach without limit trading in DOEs, as introduced in Section \ref{sec:uniform_1}. This confirms that allowing limit trading in DOEs helps mitigate the conservatism associated with the approach in Section \ref{sec:uniform_1}.
\end{remark}

\subsubsection{Locational Pricing}
For the purpose of comparison, we apply the competitive equilibrium with locational pricing in Definition \ref{def0}. The grid coordinator solves the social welfare maximization problem in \eqref{eq_social_locational} and obtaines locational prices $\lambda^\ast_i(t)$ as the combination of the optimal dual variables corresponding to the power balance constraint and voltage constraints \cite{salehi2024acc}, which is then sent to the associated aggregators. After receiving the locational prices, aggregators decide about the EV and home battery charging profiles of their prosumers as well as their traded energy to maximize their payoff in \eqref{eq_competitive_0}. The locational prices for aggregators within the same node are identical. We denote the locational price at node $j$ and time $t$ as $\Lambda^\ast_j(t)$, where $j \in \{1, \dots, 12\}$. For illustration, Fig. \ref{fig_price_locational_1} presents the locational prices at all 12 nodes. The nodal voltages are observed to be the same as the uniform pricing approach, as depicted in Fig. \ref{fig_voltage_1}. Figure \ref{fig_price_locational_1} illustrates that nodal prices differ when  nodal voltages reach their boundaries, e.g., during time steps $16 < t < 34$  which is indicated by the gray shaded area in Figs. \ref{fig_voltage_1} and \ref{fig_price_locational_1}. The budget surplus collected by the grid coordinator is \$637.

\begin{figure}[!t]
	\centering
	\includegraphics[width=3.35 in]{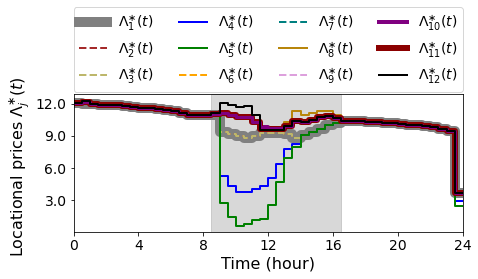}
	\caption{Locational prices $\Lambda^\ast_j(t)$ (\textcent/kWh) for unit electricity exchange at each node $j$ over a 24-hour period (48 time steps of length $0.5$ hour). Locational prices are different when the voltage constraints are binding.}
	\label{fig_price_locational_1}
\end{figure}


\subsubsection{Discussion}
Figure \ref{fig_income_1} depicts the net income/expenditure of the 300 aggregators. As shown, all aggregators earn more under the uniform pricing approach with DOE limit trading compared to the locational pricing approach. Specifically, each aggregator receives \$2.12 more than they would under the locational pricing approach. This increase corresponds to the budget surplus of \$637, which is collected by the coordinator under the locational pricing approach and is now equally redistributed among all aggregators, aligning with Proposition \ref{proposition1}. These observations confirm that prosumers benefit more under the uniform pricing approach, as DOE limit trading functions as an ancillary service market.
\begin{figure}[!t]
	\centering
	\includegraphics[width=3 in]{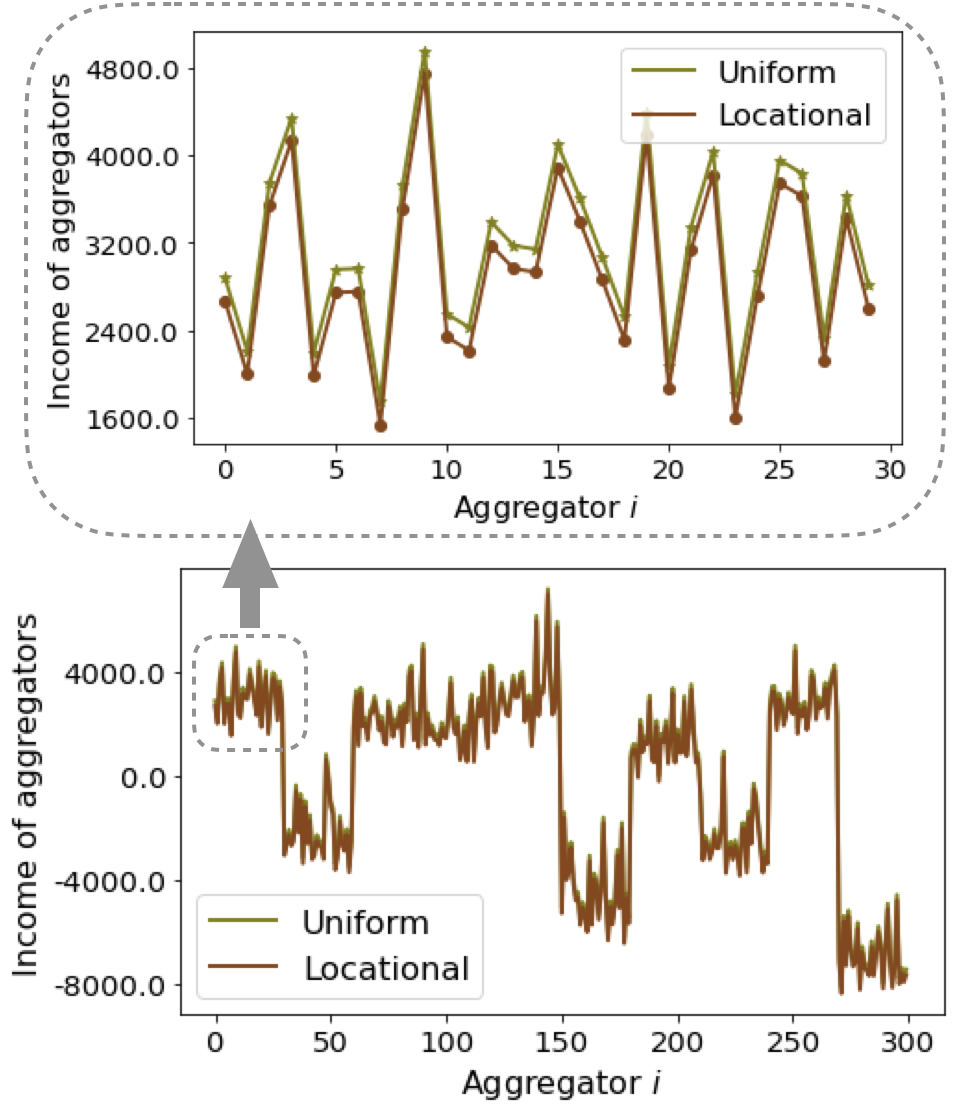}
	\caption{Income of aggregators (in \textcent) in a 24-hour period under the locational and uniform pricing approaches. Aggregators receive a higher income in the uniform pricing approach with DOE limit trading compared to the locational pricing approach. Left: A zoomed-in view of the income of the first 30 aggregators. Each aggregator represents the aggregation of 12 prosumers.}
	\label{fig_income_1}
\end{figure}
\section{Conclusion}\label{sec:conclusion}
In this paper, we proposed P2P energy markets with uniform pricing using the concept of DOEs to address the social acceptance issues arising from locational pricing. By introducing a new degree of freedom for trading DOE limits, we showed that the uniform pricing approach does not compromise grid capacity and is equivalent to the traditional locational pricing approach. Our design is based on competitive markets where participants are price takers. We demonstrated that the resulting market equilibrium, known as the competitive equilibrium, maximizes social welfare, and vice versa. We further investigated the relationship between the proposed uniform pricing markets and the traditional locational pricing market. We showed that, in the locational pricing approach, a budget surplus is generated, whereas, in the uniform pricing approach, the market is strongly budget-balanced, meaning the total payments from participants sum to zero. Finally, we evaluated our methods through a case study of EV charging on an IEEE 13-node test feeder and discussed the effectiveness of the proposed approaches. Future work could explore extensions to systems with uncertainties, where model predictive control potentially addresses challenges related to uncertainties associated with renewable energy generation, demand, and the timing of controllable loads such as EVs.
\section*{Acknowledgment}
The authors would like to express their sincere gratitude to Professor Ying Tan and her research group in the Department of Mechanical Engineering at the University of Melbourne, Australia, for their generous support throughout the completion of this work. 
This research was supported by the Australian Research Council under grants DP190102158, DP190103615, LP210200473, and DP230101014.
\appendices

\section{Proof of Theorem \ref{theorem2}}\label{Appendix_Theorem2}
Let $(\mathbf U^\ast, \mathbf P^\ast)$, along with ${\lambda}^\ast(t)$ for $t \in \mathcal{T}$, be a competitive equilibrium. Merging optimization problems in \eqref{eq_competitive_2} for $i \in \mathcal{N}$ implies that $(\mathbf U^\ast, \mathbf P^\ast)$ maximizes the social welfare in \eqref{eq_social_2}. 
    
    The proof of the reverse direction is based on strong duality, which follows from the satisfaction of Slater's condition. Considering the difference equation \eqref{eq_state}, the state $\mathbf x_i(t)$ evolves according to a linear combination of the initial state $\mathbf x_i(0)$ and the control input $\mathbf U_i$ such that
\begin{equation}\label{eq2_2}
	\mathbf x_i(t)=\mathbf A_i^t \mathbf x_i(0)+ \sum_{j=0}^{t-1}{\mathbf A_i^{t-j-1} \mathbf B_i \mathbf u_i(j)}, \quad t \in \{1, 2, ..., T\}.
\end{equation}
The substitution of \eqref{eq2_2} into  $f_i(\cdot)$ and $\phi_i(\cdot)$ results in
$
		f_i(\mathbf x_i(t), \mathbf u_i(t)) = \tilde{f}_{i,t}(\mathbf U_i)$ and
		$\phi_i(\mathbf x_i(T)) = \tilde{\phi}_i(\mathbf U_i),
$
where functions $\tilde{f}_{i,t}(\cdot)$ and $\tilde{\Phi}_i(\cdot)$ are concave as the composition of a concave function and an affine function. For $i \in \mathcal{N}$, denote $\mathbb{U}_i =\{ \mathbf U_i| \underline{\mathbf u}_i\leq \mathbf u_i(t) \leq \overline{\mathbf u}_i;  \underline{\mathbf x}_i \leq \mathbf x_i(t) \leq \overline{\mathbf x}_i, \text{for } t \in \mathcal{T}\}$,  which is a polyhedral set. For any $(\mathbf U, \mathbf P)$ such that $\mathbf U_i \in \mathbb{U}_i$, the Lagrangian function of \eqref{eq_social_2} is defined as
\begin{equation}\label{eq15_0}
	\begin{aligned}
	&\mathcal L(\mathbf U, \mathbf P,\boldsymbol{\alpha}, \boldsymbol\Psi, \boldsymbol \Pi) = - \sum_{i=1}^{N}\Big(\sum_{t=0}^{T-1} \tilde{f}_{i,t}( \mathbf U_i)+\tilde{\phi}_i(\mathbf U_i) \Big) 
	\\&+ \sum_{t=0}^{T-1}\alpha(t) \Big(-\sum_{i=1}^N p_i(t) \Big) 
	+ \sum_{t=0}^{T-1}\sum_{i=1}^{N} \psi_{i}(t) \Big( p_i(t) +h_i(\mathbf u_i(t)) \\&-a_i(t)\Big)
 + \sum_{t=0}^{T-1}\sum_{i=1}^{N} \boldsymbol \pi_{i}(t) \cdot \Big(\mathbf g_i( p_i(t))-\mathbf w_i^\ast(t)\Big),
\end{aligned}
\end{equation}
where $\psi_{i}(t) \geq 0$, $\boldsymbol \pi_i(t) \geq 0$, $\boldsymbol \alpha =(\alpha(0), \dots, \alpha({T-1}))^\top$,   $\boldsymbol\Psi_i =(\psi_{i}(0), \dots, \psi_{i}(T-1))^\top$, and $ {\boldsymbol\Psi} = ( {\boldsymbol\Psi}_1^\top, \dots,  {\boldsymbol\Psi}_N^\top)^\top$. We define $\boldsymbol\Pi_i$ and  $\boldsymbol\Pi$  in a similar way. 
The Lagrangian  \eqref{eq15_0}  is separable such that
\begin{equation}\label{eq16_0}
	\mathcal L(\mathbf U, \mathbf P, \boldsymbol{\alpha},  \boldsymbol\Psi, \boldsymbol \Pi) = \sum_{i=1}^{N} \mathcal L_i(\mathbf U_i, \mathbf p_i, \boldsymbol{\alpha}, \boldsymbol\Psi_i, \boldsymbol \Pi_i),
\end{equation}
where
$	\mathcal L_i (\mathbf U_i, \mathbf p_i,  \boldsymbol{\alpha},  \boldsymbol\Psi_i, \boldsymbol \Pi_i)
	=- \sum_{t=0}^{T-1} \Big(\tilde{f}_{i,t}(\mathbf U_i)+\alpha(t)  p_i(t) 
	 \Big) 
	+  \sum_{t=0}^{T-1} \psi_{i}(t) \Big( p_i(t)+h_i(\mathbf u_i(t))-a_i(t) \Big) + \sum_{t=0}^{T-1}\boldsymbol \pi_{i}(t) \cdot \Big( \mathbf g_i( p_i(t))-\mathbf w_i^\ast(t)\Big) - \tilde{\phi}_i(\mathbf U_i) $.
Let $(\mathbf U^\star, \mathbf P^\star)$ be an optimal primal solution to  \eqref{eq_social_2}, and  $(-\boldsymbol{\alpha}^\ast,  \boldsymbol{\Psi}^\ast , \boldsymbol{\Pi}^\ast)$ be an optimal dual solution. Given that Slater's condition is satisfied, strong duality leads to 
\begin{equation}\label{eq_primal_4_0}
	(\mathbf U^\star, \mathbf P^\star) \in  \arg \min_{\mathbf U, \mathbf P} \mathcal L(\mathbf U, \mathbf P,  \boldsymbol{\alpha}^\ast,  \boldsymbol\Psi^\ast, \boldsymbol\Pi^\ast),
\end{equation}
for $ \mathbf U_i \in \mathbb{U}_i$. 
Considering \eqref{eq16_0} and \eqref{eq_primal_4_0}, there holds
\begin{equation}\label{eq13_0}
	(\mathbf U_i^\star,  \mathbf p_i^\star) \in \arg \min_{\mathbf U_i, \mathbf p_i} \mathcal L_i( \mathbf U_i,  \mathbf p_i,  \boldsymbol \alpha^\ast, \boldsymbol{\Psi}_i^\ast, \boldsymbol{\Pi}_i^\ast),
\end{equation}
for $ \mathbf U_i \in \mathbb{U}_i$. 
In addition, strong duality implies
 $(\boldsymbol{\alpha}^\ast, \boldsymbol{\Psi}^\ast, \boldsymbol{\Pi}^\ast) \in \arg \max_{\boldsymbol \alpha,  \boldsymbol{\Psi}, \boldsymbol{\Pi}} \min_{\mathbf U, \mathbf P} 	\mathcal L(\mathbf U, \mathbf P,  \boldsymbol{\alpha},  \boldsymbol\Psi, \boldsymbol\Pi) 
	= \arg \max_{\boldsymbol \alpha,\boldsymbol{\Psi}, \boldsymbol{\Pi}} \min_{\mathbf U, \mathbf P} 	\sum_{i=1}^{N} \mathcal L_i(\mathbf U_i, \mathbf p_i,  \boldsymbol{\alpha},  \boldsymbol\Psi_i, \boldsymbol\Pi_i)$ and 
\begin{equation}\label{eq12_0}
	(\boldsymbol{\Psi}_i^\ast, \boldsymbol{\Pi}_i^\ast)  \in  \arg \max_{\boldsymbol \Psi_i, \boldsymbol \Pi_i} \min_{\mathbf U_i, \mathbf p_i}  \mathcal L_i(\mathbf U_i, \mathbf p_i,  \boldsymbol{\alpha}^\ast,  \boldsymbol\Psi_i, \boldsymbol\Pi_i),
\end{equation}
where the primal and dual variables are in their respective domains. Considering  $\lambda^\ast(t)=\alpha^\ast(t)$  for $t \in \mathcal{T}$, the function $ \mathcal L_i(\mathbf U_i, \mathbf p_i,  \boldsymbol{\alpha}^\ast, \boldsymbol\Psi_i, \boldsymbol\Pi_i)$ is the Lagrangian  of \eqref{eq_competitive_2}. 
Therefore,   $(\mathbf \Psi_i^\ast, \mathbf \Pi_i^\ast)$ is an optimal dual solution of \eqref{eq_competitive_2} according to \eqref{eq12_0}.  Given that
Slater's condition is satisfied, strong duality implies that according to  \eqref{eq13_0}, $(\mathbf U^\star, \mathbf P^\star)$ is an optimal primal solution of \eqref{eq_competitive_2}  which satisfies $\sum_{i=1}^N p_i^\ast(t)=0$ for $t\in \mathcal{T}$. 
Consequently, $( \mathbf U^\star, \mathbf P^\star)$, along with $\lambda^\ast(t)=\alpha^\ast(t)$  for $t \in \mathcal{T}$, forms a competitive equilibrium.

\section{Proof of Theorem \ref{theorem3}}\label{Appendix_Theorem3}
The proof is straightforward following the proof of Theorem \ref{theorem2}. Let $( \mathbf U^\ast, \mathbf P^\ast, \mathbf L^\ast)$, along with ${\lambda}^\ast(t)$ and $ \boldsymbol \beta^\ast(t)$ for $t \in \mathcal{T}$,  be a competitive equilibrium. Merging optimization problems in \eqref{eq17} for $i \in \mathcal{N}$ implies that $( \mathbf U^\ast, \mathbf P^\ast, \mathbf L^\ast)$ maximizes the social welfare in \eqref{eq2}. 

   The proof of the reverse direction is based on strong duality, which follows from the satisfaction of Slater's condition. Recall that $\mathbb{U}_i =\{ \mathbf U_i| \underline{\mathbf u}_i\leq \mathbf u_i(t) \leq \overline{\mathbf u}_i;  \underline{\mathbf x}_i \leq \mathbf x_i(t) \leq \overline{\mathbf x}_i, \text{for } t \in \mathcal{T}\}$ for  $i \in \mathcal{N}$,  which is a polyhedral set. For any $(\mathbf U, \mathbf P, \mathbf L)$ such that $\mathbf U_i \in \mathbb{U}_i$, the Lagrangian function of \eqref{eq2} is defined as
\begin{equation}\label{eq15}
	\begin{aligned}
	\mathcal L(&\mathbf U, \mathbf P, \mathbf L, \boldsymbol{\alpha}, \boldsymbol{\delta}, \boldsymbol\Psi, \boldsymbol \Pi) = - \sum_{i=1}^{N}\Big(\sum_{t=0}^{T-1} \tilde{f}_{i,t}( \mathbf U_i)+\tilde{\phi}_i(\mathbf U_i) \Big) 
	\\&+ \sum_{t=0}^{T-1}\alpha(t) \Big(-\sum_{i=1}^N p_i(t) \Big) 
	+ \sum_{t=0}^{T-1} \boldsymbol\delta(t) \cdot \Big(-\sum_{i=1}^N \boldsymbol l_i(t) \Big) 
	\\&+ \sum_{t=0}^{T-1}\sum_{i=1}^{N} \psi_{i}(t) \Big( p_i(t) +h_i(\mathbf u_i(t))-a_i(t)\Big)
 \\&+ \sum_{t=0}^{T-1}\sum_{i=1}^{N} \boldsymbol \pi_{i}(t) \cdot \Big( \boldsymbol l_i(t) +\mathbf g_i( p_i(t))-\mathbf w_i^\ast(t)\Big),
\end{aligned}
\end{equation}
where $\psi_{i}(t) \geq 0$, $\boldsymbol \pi_i(t) \geq 0$, $\boldsymbol \alpha =(\alpha(0), \dots, \alpha({T-1}))^\top$, $\boldsymbol \delta =(\boldsymbol\delta^\top(0), \dots, \boldsymbol\delta^\top({T-1}))^\top$,  $\boldsymbol\Psi_i =(\psi_{i}(0), \dots, \psi_{i}(T-1))^\top$, and $ {\boldsymbol\Psi} = ( {\boldsymbol\Psi}_1^\top, \dots,  {\boldsymbol\Psi}_N^\top)^\top$. We define $\boldsymbol\Pi_i$ and  $\boldsymbol\Pi$  in a similar way. 
The Lagrangian  \eqref{eq15}  is separable such that
\begin{equation}\label{eq16}
	\mathcal L(\mathbf U, \mathbf P, \mathbf L, \boldsymbol{\alpha}, \boldsymbol \delta, \boldsymbol\Psi, \boldsymbol \Pi) = \sum_{i=1}^{N} \mathcal L_i(\mathbf U_i, \mathbf p_i, \mathbf L_i, \boldsymbol{\alpha}, \boldsymbol{\delta}, \boldsymbol\Psi_i, \boldsymbol \Pi_i),
\end{equation}
where
$	\mathcal L_i (\mathbf U_i, \mathbf p_i, \mathbf L_i, \boldsymbol{\alpha}, \boldsymbol{\delta}, \boldsymbol\Psi_i, \boldsymbol \Pi_i)
	=- \sum_{t=0}^{T-1} \Big(\tilde{f}_{i,t}(\mathbf U_i)+\alpha(t)  p_i(t) 
	  + \boldsymbol \delta(t) \cdot \boldsymbol l_i(t)\Big) - \tilde{\phi}_i(\mathbf U_i) 
	+  \sum_{t=0}^{T-1} \psi_{i}(t) \Big( p_i(t)+h_i(\mathbf u_i(t))-a_i(t) \Big) + \sum_{t=0}^{T-1}\boldsymbol \pi_{i}(t) \cdot \Big( \boldsymbol l_i(t) +\mathbf g_i( p_i(t))-\mathbf w_i^\ast(t)\Big)$.
Let $(\mathbf U^\star, \mathbf P^\star, \mathbf L^\star)$ be an optimal primal solution to  \eqref{eq2}, and  $(-\boldsymbol{\alpha}^\ast, -\boldsymbol \delta^\ast, \boldsymbol{\Psi}^\ast , \boldsymbol{\Pi}^\ast)$ be an optimal dual solution. Given that Slater's condition is satisfied, strong duality leads to 
\begin{equation}\label{eq_primal_4}
	(\mathbf U^\star, \mathbf P^\star, \mathbf L^\star) \in  \arg \min_{\mathbf U, \mathbf P, \mathbf L} \mathcal L(\mathbf U, \mathbf P, \mathbf L, \boldsymbol{\alpha}^\ast, \boldsymbol \delta^\ast, \boldsymbol\Psi^\ast, \boldsymbol\Pi^\ast),
\end{equation}
for $ \mathbf U_i \in \mathbb{U}_i$. 
Considering \eqref{eq16} and \eqref{eq_primal_4}, there holds
\begin{equation}\label{eq13}
	(\mathbf U_i^\star,  \mathbf p_i^\star, \mathbf L_i^\star) \in \arg \min_{\mathbf U_i, \mathbf p_i, \mathbf L_i} \mathcal L_i( \mathbf U_i,  \mathbf p_i, \mathbf L_i, \boldsymbol \alpha^\ast, \boldsymbol \delta^\ast, \boldsymbol{\Psi}_i^\ast, \boldsymbol{\Pi}_i^\ast),
\end{equation}
for $ \mathbf U_i \in \mathbb{U}_i$. 
In addition, strong duality implies
\begin{equation*}
 \begin{aligned}
	&(\boldsymbol{\alpha}^\ast, \boldsymbol \delta^\ast, \boldsymbol{\Psi}^\ast, \boldsymbol{\Pi}^\ast) \in \arg \max_{\boldsymbol \alpha, \boldsymbol \delta, \boldsymbol{\Psi}, \boldsymbol{\Pi}} \min_{\mathbf U, \mathbf P, \mathbf L} 	\mathcal L(\mathbf U, \mathbf P, \mathbf L, \boldsymbol{\alpha}, \boldsymbol \delta, \boldsymbol\Psi, \boldsymbol\Pi) 
	\\&= \arg \max_{\boldsymbol \alpha, \boldsymbol \delta, \boldsymbol{\Psi}, \boldsymbol{\Pi}} \min_{\mathbf U, \mathbf P, \mathbf L} 	\sum_{i=1}^{N} \mathcal L_i(\mathbf U_i, \mathbf p_i, \mathbf L_i, \boldsymbol{\alpha}, \boldsymbol{\delta}, \boldsymbol\Psi_i, \boldsymbol\Pi_i),
 \end{aligned}
\end{equation*} 
and, therefore, 
\begin{equation}\label{eq12}
\resizebox {1\linewidth} {!} {$
	(\boldsymbol{\Psi}_i^\ast, \boldsymbol{\Pi}_i^\ast)  \in  \arg \max_{\boldsymbol \Psi_i, \boldsymbol \Pi_i} \min_{\mathbf U_i, \mathbf p_i, \mathbf L_i} \mathcal L_i(\mathbf U_i, \mathbf p_i, \mathbf L_i, \boldsymbol{\alpha}^\ast, \boldsymbol{\delta}^\ast, \boldsymbol\Psi_i, \boldsymbol\Pi_i), $}
\end{equation}
where the primal and dual variables are in their respective domains. Considering  $\lambda^\ast(t)=\alpha^\ast(t)$ and $\boldsymbol \beta^\ast(t)=\boldsymbol \delta^\ast(t)$ for $t \in \mathcal{T}$, the function $ \mathcal L_i(\mathbf U_i, \mathbf p_i, \mathbf L_i, \boldsymbol{\alpha}^\ast, \boldsymbol{\delta}^\ast, \boldsymbol\Psi_i, \boldsymbol\Pi_i)$ is the Lagrangian  of \eqref{eq17}. 
Therefore,   $(\mathbf \Psi_i^\ast, \mathbf \Pi_i^\ast)$ is an optimal dual solution of \eqref{eq17} according to \eqref{eq12}.  Given that
Slater's condition is satisfied, strong duality implies that according to  \eqref{eq13}, $(\mathbf U^\star, \mathbf P^\star, \mathbf L^\star)$ is an optimal primal solution of \eqref{eq17}  which satisfies $
      \sum_{i=1}^N p_i^\ast(t)=0 $ and $\sum_{i=1}^N \boldsymbol l_i^\ast(t)=0$ for $t\in \mathcal{T}
  $.
Consequently, $( \mathbf U^\star, \mathbf P^\star, \mathbf L^\star)$, along with $\lambda^\ast(t)=\alpha^\ast(t)$ and $\boldsymbol \beta^\ast(t)=\boldsymbol \delta^\ast(t)$ for $t \in \mathcal{T}$, forms a competitive equilibrium.

\section{Proof of Theorem \ref{theorem4}}\label{Appendix_Theorem4}
 (i) The proof is based on contradiction. Suppose $(\mathbf U^\star, \mathbf P^\star)$ maximizes \eqref{eq_social_locational}. Select
   $
       \boldsymbol l_i^\star(t)= \mathbf w_i^\ast(t)-\mathbf g_i(p_i^\star(t))+ \frac{1}{N}\big(\sum_{i=1}^N \mathbf g_i(p_i^\star(t))-\sum_{i=1}^N \mathbf w_i^\ast(t)\big)
   $,
   which satisfies $\boldsymbol l_i^\star(t) \leq \mathbf w_i^\ast(t)-\mathbf g_i(p_i^\star(t))$ and $\sum_{i=1}^N \boldsymbol l_i^\star(t)=0$. Select $\mathbf L_i^\star=(\boldsymbol l_i^{\star\top}(0), \cdots, \boldsymbol l_i^{\star\top}(T-1))^\top$ and $\mathbf L^\star=(\mathbf L_1^{\star\top}, \cdots, \mathbf L_N^{\star\top})^\top$. By contradiction, suppose $(\mathbf U^\star, \mathbf P^\star, \mathbf L^\star)$ does not maximize \eqref{eq2}. Then there exists $(\overline{\mathbf U}^\star, \overline{\mathbf P}^\star, \overline{\mathbf L}^\star)$ that satisfies the constraints in \eqref{eq2} and
   \begin{equation}\label{eq3}
\sum_{i=1}^{N}\Big(\sum_{t=0}^{T-1} \tilde{f}_{i,t}( \mathbf U_i^\star)+\tilde{\phi}_i(\mathbf U_i^\star) \Big) < \sum_{i=1}^{N}\Big(\sum_{t=0}^{T-1} \tilde{f}_{i,t}( \overline{\mathbf U}_i^\star)+\tilde{\phi}_i(\overline{\mathbf U}_i^\star) \Big).
   \end{equation}
   It is straightforward to show that $(\overline{\mathbf U}^\star, \overline{\mathbf P}^\star)$ satisfies all constraints in \eqref{eq_social_locational}. Consequently,   \eqref{eq3} contradicts the assumption that $(\mathbf U^\star, \mathbf P^\star)$ maximizes the social welfare in \eqref{eq_social_locational}. Therefore, we conclude that $(\mathbf U^\star, \mathbf P^\star, \mathbf L^\star)$ maximizes \eqref{eq2} as well.

   (ii) The proof of the reverse direction is based on contradiction. Suppose $(\mathbf U^\star, \mathbf P^\star, \mathbf L^\star)$ optimizes \eqref{eq2}. It is straightforward to show that $(\mathbf U^\star, \mathbf P^\star)$ satisfies all constraints in \eqref{eq_social_locational}. By contradiction, suppose $(\mathbf U^\star, \mathbf P^\star)$ does not maximize \eqref{eq_social_locational}. Then there exists another $(\overline{\mathbf U}^\star, \overline {\mathbf P}^\star)$ that satisfies all constraints in \eqref{eq_social_locational} and
 \begin{equation}\label{eq_proof1}
\sum_{i=1}^{N}\Big(\sum_{t=0}^{T-1} \tilde{f}_{i,t}( \mathbf U_i^\star)+\tilde{\phi}_i(\mathbf U_i^\star) \Big) < \sum_{i=1}^{N}\Big(\sum_{t=0}^{T-1} \tilde{f}_{i,t}( \overline{\mathbf U}_i^\star)+\tilde{\phi}_i(\overline{\mathbf U}_i^\star) \Big).
   \end{equation}
   According to part (i), there exists $\overline{\mathbf L}^\star$ which along with $(\overline{\mathbf U}^\star, \overline{\mathbf P}^\star)$ maximizes \eqref{eq2}. Consequently,  \eqref{eq_proof1} contradicts the assumption that $(\mathbf U^\star, \mathbf P^\star, \mathbf L^\star)$ maximizes \eqref{eq2}. Therefore, we conclude that $(\mathbf U^\star, \mathbf P^\star)$ maximizes \eqref{eq_social_locational} as well.

   Following from part (i) and (ii), the two problems are equivalent.

\section{Proof of Theorem \ref{theorem5}}\label{Appendix_Theorem5}
Recall that $\mathbb{U}_i =\{ \mathbf U_i| \underline{\mathbf u}_i\leq \mathbf u_i(t) \leq \overline{\mathbf u}_i;  \underline{\mathbf x}_i \leq \mathbf x_i(t) \leq \overline{\mathbf x}_i, \text{for } t \in \mathcal{T}\}$ for  $i \in \mathcal{N}$,  which is a polyhedral set. For any $(\mathbf U, \mathbf P)$ such that $\mathbf U_i \in \mathbb{U}_i$, 
the Lagrangian function of \eqref{eq_social_locational} is defined as 
\begin{equation}\label{eq5}
	\begin{aligned}
	&\mathcal L(\mathbf U, \mathbf P, \boldsymbol{\alpha}, \underline{\boldsymbol\Xi}, \overline{\boldsymbol\Xi}, \boldsymbol\Psi) = - \sum_{i=1}^{N}\Big(\sum_{t=0}^{T-1} \tilde{f}_{i,t}( \mathbf U_i)+\tilde{\phi}_i(\mathbf U_i) \Big) 
	\\&+ \sum_{t=0}^{T-1}\alpha(t) \Big(-\sum_{i=1}^N p_i(t) \Big) 
	+\sum_{t=0}^{T-1}\sum_{i=1}^{N}\underline\xi_{i}(t) \Big( \underline v_i -v_0 \\&- \sum_{k =1}^N R_{ik}  p_k(t) \Big) 
	+ \sum_{t=0}^{T-1}\sum_{i=1}^{N}\overline\xi_{i}(t) \Big( v_0+ \sum_{k =1}^N R_{ik}  p_k(t)-\overline v_i \Big) 
	\\&+ \sum_{t=0}^{T-1}\sum_{i=1}^{N} \psi_{i}(t) \Big( p_i(t) +h_i(\mathbf u_i(t))-a_i(t)\Big),
\end{aligned}
\end{equation}
where $\underline\xi_{i}(t), \overline\xi_{i}(t), \psi_{i}(t) \geq 0$, $\boldsymbol \alpha =(\alpha(0), \dots, \alpha({T-1}))^\top$, $ \underline{\boldsymbol\Xi}_i =(\underline\xi_{i}(0), \dots, \underline\xi_{i}(T-1))^\top$, and $ \underline{\boldsymbol\Xi} = ( \underline{\boldsymbol\Xi}_1^\top, \dots,  \underline{\boldsymbol\Xi}_N^\top)^\top$. We define $\overline{\boldsymbol\Xi}_i, \overline{\boldsymbol\Xi}$, and $\boldsymbol\Psi_i, \boldsymbol\Psi$ in a similar manner. 
The Lagrangian  \eqref{eq5}  is separable such that 
\begin{equation}\label{eq6}
	\mathcal L(\mathbf U, \mathbf P, \boldsymbol{\alpha}, \underline{\boldsymbol\Xi}, \overline{\boldsymbol\Xi}, \boldsymbol\Psi) = \sum_{i=1}^{N} \mathcal L_i(\mathbf U_i, \mathbf p_i, \boldsymbol{\alpha}, \underline{\boldsymbol\Xi}, \overline{\boldsymbol\Xi}, \boldsymbol\Psi_i),
\end{equation}
where
\begin{equation}\label{eq9}
	\begin{aligned}
	\mathcal L_i &(\mathbf U_i, \mathbf p_i, \boldsymbol{\alpha}, \underline{\boldsymbol\Xi}, \overline{\boldsymbol\Xi}, \boldsymbol\Psi_i)
	=- \sum_{t=0}^{T-1} \Big(\tilde{f}_{i,t}(\mathbf U_i)+\alpha(t)  p_i(t) 
	 \\ &+ \sum_{k=1}^{N} \underline\xi_{k}(t)  R_{ki}  p_i(t)  - \sum_{k=1}^{N} \overline\xi_{k}(t)  R_{ki}  p_i(t) \\&-\underline\xi_{i}(t) (\underline v_i -v_0)  +\overline\xi_{i}(t)  (\overline v_i -v_0) \Big) - \tilde{\phi}_i(\mathbf U_i) 
	\\&+  \sum_{t=0}^{T-1} \psi_{i}(t) \Big( p_i(t)+h_i(\mathbf u_i(t))-a_i(t)\Big).
\end{aligned}
\end{equation}

Let $(\mathbf U^\star, \mathbf P^\star)$ be an optimal primal solution to \eqref{eq_social_locational}, and  $(-\boldsymbol{\alpha}^\ast, \underline{\boldsymbol \Xi}^\ast, \overline{\boldsymbol\Xi}^\ast, \boldsymbol{\Psi}^\ast)$ be an optimal dual solution. Given that Slater's condition is satisfied, strong duality yields 
\begin{equation}\label{eq_primal}
	(\mathbf U^\star, \mathbf P^\star) \in \arg \min_{\mathbf U, \mathbf P} \mathcal L(\mathbf U, \mathbf P, \boldsymbol{\alpha}^\ast, \underline{\boldsymbol\Xi}^\ast, \overline{\boldsymbol\Xi}^\ast, \boldsymbol\Psi^\ast),
\end{equation}
for $ \mathbf U_i \in \mathbb{U}_i$. Denote
\begin{equation}\label{eq10}
	\begin{aligned}
		&\tilde {\mathcal L}_i (\mathbf U_i, \mathbf p_i, \mathbf S_i, \boldsymbol{\alpha}, \underline{\boldsymbol\Xi}, \overline{\boldsymbol\Xi}, \boldsymbol\Psi_i) =  - \sum_{t=0}^{T-1} \Big(\tilde{f}_{i,t}(\mathbf U_i)+\alpha(t)  p_i(t)
		\\ &+ \sum_{k=1}^{N} \underline\xi_{k}(t)  R_{ki}  p_i(t)  - \sum_{k=1}^{N} \overline\xi_{k}(t)  R_{ki}  p_i(t) +\sum_{k=1}^N\underline{  \xi}_k(t)  \underline{ w}_{ki}^\ast(t) \\&+\sum_{k=1}^N \overline{  \xi}_k(t)  \overline{ w}_{ki}^\ast(t)
  -\sum_{k=1}^N\underline{  \xi}_k(t)  \underline{ s}_{ki}(t)-\sum_{k=1}^N \overline{  \xi}_k(t)  \overline{ s}_{ki}(t)\Big)
		\\&+  \sum_{t=0}^{T-1} \psi_{i}(t) \Big( p_i(t)+h_i(\mathbf u_i(t))-a_i(t)\Big)- \tilde{\phi}_i(\mathbf U_i) ,
	\end{aligned}
\end{equation}
where $\mathbf s_i(t)=(\overline{ s}_{1i}(t), \dots, \overline{ s}_{Ni}(t), \underline{ s}_{1i}(t), \dots, \underline{ s}_{Ni}(t))^\top$, $\mathbf S_i=(\mathbf s_i^\top(0), \dots, \mathbf s_i^\top(T-1))^\top$, and $\mathbf S_i \geq 0$.
Considering \eqref{eq6} and \eqref{eq_primal}, we obtain
$
	(\mathbf U_i^\star,  \mathbf p_i^\star) \in \arg \min_{\mathbf U_i, \mathbf p_i} \mathcal L_i( \mathbf U_i,  \mathbf p_i, \boldsymbol \alpha^\ast, \underline{\boldsymbol \Xi}^\ast, \overline{\boldsymbol \Xi}^\ast, \boldsymbol{\Psi}_i^\ast)$.
    Since $\underline\xi_{k}(t)\geq 0$ and $\overline\xi_{k}(t) \geq 0$, there holds 
\begin{equation}\label{eq18}
		(\mathbf U_i^\star,  \mathbf p_i^\star) \in \arg \min_{\mathbf U_i, \mathbf p_i, \mathbf S_i}  \tilde{\mathcal L}_i ( \mathbf U_i,  \mathbf p_i, \mathbf S_i, \boldsymbol \alpha^\ast, \underline{\boldsymbol \Xi}^\ast, \overline{\boldsymbol \Xi}^\ast, \boldsymbol{\Psi}_i^\ast),
\end{equation}
for $ \mathbf U_i \in \mathbb{U}_i$. In addition, strong duality implies
\begin{multline*}
	(\boldsymbol{\alpha}^\ast, \underline{\boldsymbol \Xi}^\ast, \overline{\boldsymbol\Xi}^\ast, \boldsymbol{\Psi}^\ast) \in \arg \max_{\boldsymbol \alpha, \underline{\boldsymbol \Xi}, \overline{\boldsymbol \Xi}, \boldsymbol{\Psi}} \min_{\mathbf U, \mathbf P} 	\mathcal L(\mathbf U, \mathbf P, \boldsymbol{\alpha}, \underline{\boldsymbol\Xi}, \overline{\boldsymbol\Xi}, \boldsymbol\Psi) 
	\\= \arg \max_{\boldsymbol \alpha, \underline{\boldsymbol \Xi}, \overline{\boldsymbol \Xi}, \boldsymbol{\Psi}} \min_{\mathbf U, \mathbf P} 	\sum_{i=1}^{N} \mathcal L_i(\mathbf U_i, \mathbf p_i, \boldsymbol{\alpha}, \underline{\boldsymbol\Xi}, \overline{\boldsymbol\Xi}, \boldsymbol\Psi_i),
\end{multline*}
and
$
	\boldsymbol{\Psi}_i^\ast \in \arg \max_{\boldsymbol \Psi_i} \min_{\mathbf U_i, \mathbf p_i} \mathcal L_i(\mathbf U_i, \mathbf p_i, \boldsymbol{\alpha}^\ast, \underline{\boldsymbol\Xi}^\ast, \overline{\boldsymbol\Xi}^\ast, \boldsymbol\Psi_i)
$.
Consequently, there holds 
\begin{equation}\label{eq7}
	\boldsymbol{\Psi}_i^\ast \in \arg \max_{\boldsymbol \Psi_i} \min_{\mathbf U_i, \mathbf p_i, \mathbf S_i} \tilde {\mathcal L}_i(\mathbf U_i, \mathbf p_i, \mathbf S_i, \boldsymbol{\alpha}^\ast, \underline{\boldsymbol\Xi}^\ast, \overline{\boldsymbol\Xi}^\ast, \boldsymbol\Psi_i),
\end{equation}
where the primal and dual variables are considered to be in their respective domains. The function $\tilde{\mathcal L}_i(\mathbf U_i, \mathbf p_i, \mathbf S_i, \boldsymbol{\alpha}^\ast, \underline{\boldsymbol\Xi}^\ast, \overline{\boldsymbol\Xi}^\ast, \boldsymbol\Psi_i)$ is the Lagrangian  of \eqref{eq17} if we define
\begin{equation}\label{eq11}
\begin{aligned}
 \mathbf g_{it}(p_i(t))&=(R_{1i}, \dots, R_{Ni}, -R_{1i},  \dots, -R_{Ni})^\top p_i(t), 
 \\
  \mathbf{w}^\ast_i(t) &= (\overline w^\ast_{1i}(t), \dots, \overline w^\ast_{Ni}(t), \underline w^\ast_{1i}(t), \dots, \underline w^\ast_{Ni}(t))^\top,\\
  \boldsymbol l_i(t)&= \mathbf{w}^\ast_i(t) - \mathbf g_{it}(p_i(t))- \mathbf s_i(t), \,\,\, \lambda^\ast(t)=\alpha^\ast(t), \\ \boldsymbol \beta^\ast(t)&=(\overline \xi^\ast_1(t), \dots, \overline{\xi}^\ast_N(t), \underline \xi_1^\ast(t), \dots, \underline \xi_N^\ast(t))^\top.
 \end{aligned}
\end{equation}
Consequently,  according to \eqref{eq7}, $\mathbf \Psi_i^\ast$ is an optimal dual solution of \eqref{eq17}.  Besides, since Slater's condition holds, strong duality implies that according to  \eqref{eq18}, $(\mathbf U^\star, \mathbf P^\star)$ is an optimal primal solution of \eqref{eq17}. 
Select the elements of $\mathbf L^\ast$ as
$
       \boldsymbol l_i^\ast(t)= \mathbf w_i^\ast(t)-\mathbf g_i(p_i^\star(t))+ \frac{1}{N}\big(\sum_{i=1}^N \mathbf g_i(p_i^\star(t))-\sum_{i=1}^N \mathbf w_i^\ast (t)\big)$,
   which satisfies $\boldsymbol l_i^\ast(t) \leq \mathbf{w}_i^\ast(t)- \mathbf g_i(p_i^\star(t))$ and $\sum_{i=1}^N \boldsymbol l_i^\ast(t)=0$.
Then, the tuple $( \mathbf U^\star, \mathbf P^\star, \mathbf L^\ast)$, along with $ {\lambda}^\ast(t)$ and $ \boldsymbol \beta^\ast(t)$ for $t \in \mathcal{T}$, forms a competitive equilibrium for \eqref{eq17}. 

\section{Proof of Proposition \ref{proposition_2}}\label{Appendix_Proposition_2}
(i) Consider \eqref{eq9} which is derived from the the Lagrangian  of \eqref{eq_social_locational}.  Comparing the Lagrangian of \eqref{eq_competitive_0} with \eqref{eq9}, the summation of all payments collected by the system coordinator at each time step $t \in \mathcal{T}$ satisfies 
   \begin{equation}\label{eq21_1}
    -\sum_{i=1}^N \lambda_i^\ast(t)p_i^\ast(t)=\sum_{i=1}^N \big[\underline\xi_{i}^\ast(t) (v_0-\underline v_i)  +\overline\xi_{i}^\ast(t)  (\overline v_i -v_0)\big]   
   \end{equation}
   according to the complementary slackness conditions associated with the voltage constraints in \eqref{eq_social_locational}. Recall that the optimal dual variables corresponding to inequality constraints are nonnegative, i.e., $\underline\xi_{i}^\ast(t)\geq 0$ and $\overline\xi_{i}^\ast(t) \geq 0$. By assumption, the lower and upper bounds on the voltage typically represent $\pm5 \% $ of the nominal voltage $v_0$, leading to $(v_0-\underline v_i) \geq 0$ and $(\overline v_i -v_0) \geq 0$. Consequently, it follows that $\sum_{i=1}^N \big[\underline\xi_{i}^\ast(t) (v_0-\underline v_i)  +\overline\xi_{i}^\ast(t)  (\overline v_i -v_0)\big] \geq 0$, and therefore, the summation of all payments collected by the system coordinator is nonnegative.

   (ii) According to the definition of competitive equilibrium, there holds $\sum_{i=1}^N p_i^\ast(t)=0$ and $\sum_{i=1}^N  \boldsymbol l_i^\ast(t)=0$ for $t \in \mathcal{T}$. Then, we have $\sum_{i=1}^N \lambda^\ast(t)p_i^\ast(t)= \lambda^\ast(t) \sum_{i=1}^N p_i^\ast(t)=0$ and $\sum_{i=1}^N \boldsymbol  \beta^\ast(t) \cdot \boldsymbol l_i^\ast(t)= \boldsymbol  \beta^\ast(t) \cdot \sum_{i=1}^N  \boldsymbol l_i^\ast(t)=0$ for $t \in \mathcal{T}$.
   
\section{Proof of Proposition \ref{proposition1}}\label{Appendix_Proposition1}
  Recall that the budget surplus equals \eqref{eq21_1}. If the budget surplus is distributed equally to prosumers, each prosumer receives 
    \begin{equation}\label{eq20_1}
   \frac{1}{N}\sum_{k=1}^N \big[\underline\xi_{k}^\ast(t) (v_0-\underline v_k)  +\overline\xi_{k}^\ast(t)  (\overline v_k -v_0)\big].      
    \end{equation} 
    Note that $v_0-\underline v_i$ and $\overline v_i -v_0$ constitute the elements of vector $\boldsymbol{\nu}(t)$ in \eqref{eq:constraints} when there exists only voltage constraints in the grid. In the meanwhile, the Lagrangian of \eqref{eq17} is provided in \eqref{eq10} when we consider \eqref{eq11}. Comparing \eqref{eq10} with the Lagrangian of \eqref{eq_competitive_0}, we conclude that in the competitive market in \eqref{eq17} each prosumer receives an additional payment of  
    \begin{equation}\label{eq20_2}
    \sum_{k=1}^N \big[\underline{  \xi}_k^\ast(t)  \underline{ w}_{ki}^\ast(t)+ \overline{  \xi}_k^\ast(t)  \overline{ w}_{ki}^\ast(t)\big]    
    \end{equation}
    compared to the competitive market with locational pricing in \eqref{eq_competitive_0}. Comparing \eqref{eq20_1} with \eqref{eq20_2}, there holds $\underline{ w}_{ki}^\ast(t)=\frac{1}{N}(v_0-\underline v_k)$ and $\overline{ w}_{ki}^\ast(t)=\frac{1}{N}(\overline v_k -v_0)$ for $k = \{1, \dots, N\}$. According to \eqref{eq11}, $\overline{ w}_{ki}^\ast(t)$ and $\underline{ w}_{ki}^\ast(t)$ constitute the elements of $\mathbf w_i^\ast(t)$ for $k = \{1, \dots, N\}$. Consequently, it follows that $\mathbf w_i^\ast(t)=\boldsymbol{\nu}(t)/N$.


\begin{IEEEbiography}[{\includegraphics[width=1in,height=1.25in,clip,keepaspectratio]{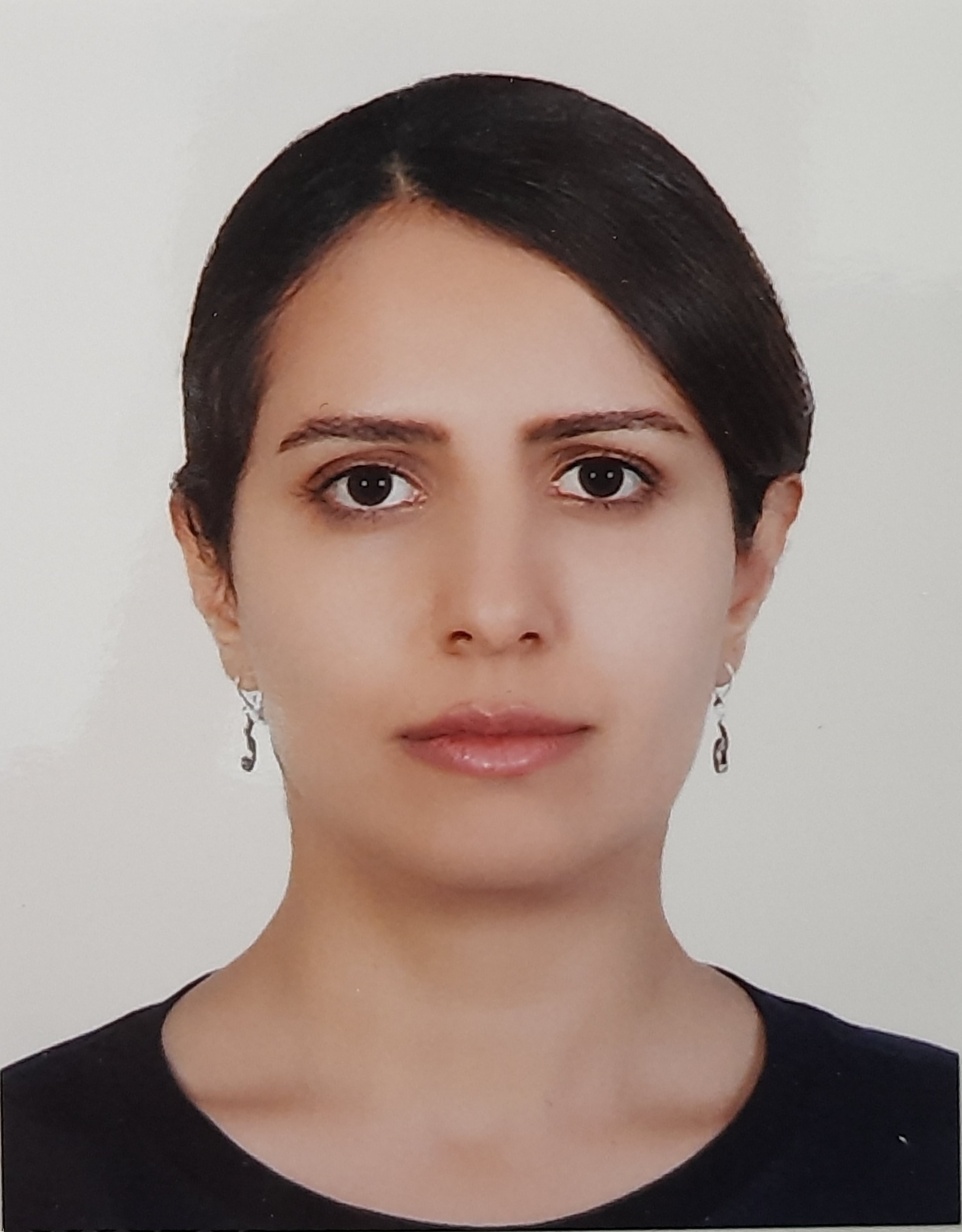}}]
{Zeinab Salehi} received the B.Sc. and M.Sc. degrees
(with Distinction) in electrical engineering from Shiraz University, Shiraz, Iran, in 2015 and
2018, respectively. She is currently working toward the Ph.D. degree in power and control systems engineering with the School of Engineering, the Australian National University,
Canberra, ACT, Australia. Her research interests include control theory and application, multi-agent systems, distributed systems, power system optimization and planning, smart grids, renewable energy integration, machine learning, and model order reduction.
\end{IEEEbiography}


\begin{IEEEbiography}[
{
\includegraphics[width=1in,height=1.25in,clip,keepaspectratio]{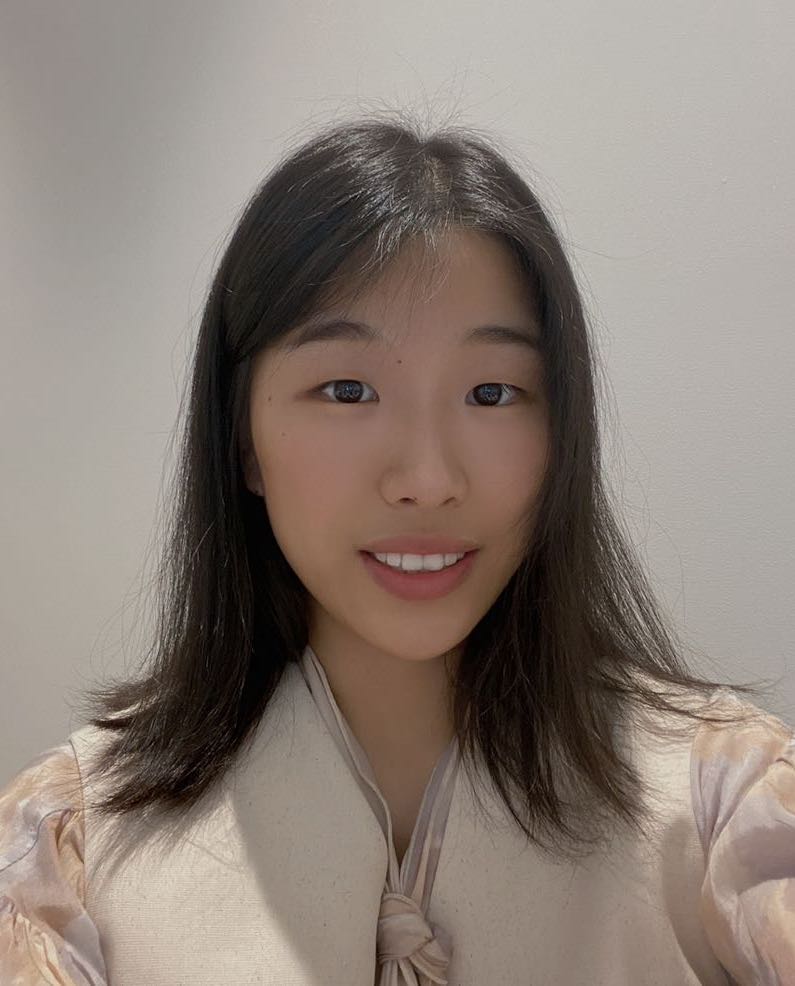}
}]
{Yijun Chen} received her B.Eng. degree in 2019 from the Beijing University of Posts and Telecommunications, China, and her Ph.D. degree in 2023 from the School of Aerospace, Mechanical, and Mechatronic Engineering at the University of Sydney, Australia. Following her Ph.D., she held a postdoctoral position at the Australian National University, where her research focused on enhancing power system stability through the integration of large-scale batteries using negative imaginary systems theory. She is currently a departmental research fellow in the Department of Electrical and Electronic Engineering at the University of Melbourne, with research interests in control theory and power systems. Dr. Chen was recognized as one of the five finalists for the IFAC Congress Young Author Prize in 2023.
\end{IEEEbiography}

\vskip -2\baselineskip plus -1fil

\begin{IEEEbiography}[
{\includegraphics[width=1in,height=1.25in,clip,keepaspectratio]{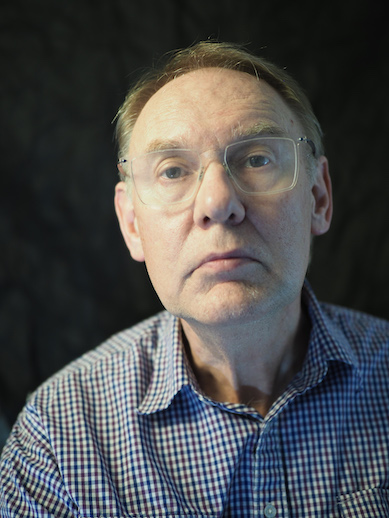}}]
{Ian R. Petersen} was born in Victoria, Australia. He received the Ph.D. degree in electrical engineering from the University of
Rochester, Rochester, NY, USA, in 1984. From 1983 to 1985, he was a Postdoctoral
Fellow with the Australian National University, Canberra, ACT, Australia. From 2017, he has
been a Professor with the School of Engineering, the Australian National University. He was the Interim Director of the School of Engineering with the Australian National University, from 2018–2019. From 1985 until 2016, he was with UNSW Canberra, Canberra, ACT, Australia, where he was a Scientia Professor, an Australian Federation Fellow, and an Australian Research Council Laureate Fellow with the School of Engineering and Information Technology. He was an ARC Executive Director for Mathematics Information and Communications, Acting Deputy ViceChancellor Research with UNSW. His research interests are in robust control theory, quantum control theory and stochastic control theory. Prof. Petersen is a Fellow of the IEEE, the International Federation of Automatic Control and the Australian Academy of Science. He has served as an Editor for \emph{Automatica} and an Associate Editor for \emph{IEEE Transactions on Automatic Control, Systems and Control Letters,
 Automatica, IEEE Transactions on Control Systems Technology},
and \emph{SIAM Journal on Control and Optimization}. 
\end{IEEEbiography}

\vskip -2\baselineskip plus -1fil

\begin{IEEEbiography}[
{\includegraphics[width=1in,height=1.25in,clip,keepaspectratio]{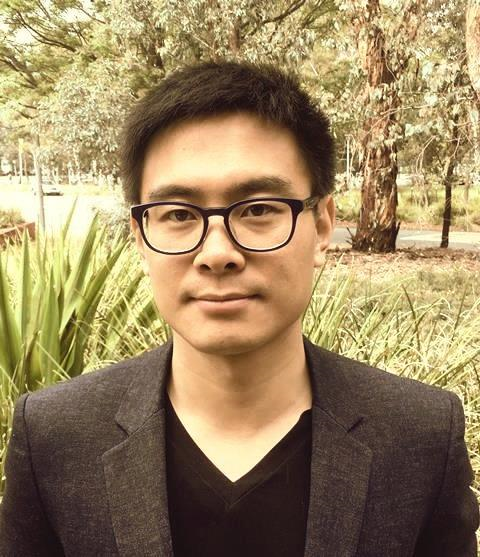}}]
{Guodong Shi} received the B.Sc. degree in mathematics and applied mathematics from the School of Mathematics, Shandong University, Jinan, China, in 2005, and the Ph.D. degree in systems theory from the Academy of Mathematics and Systems Science, Chinese Academy of Sciences, Beijing, China, in 2010. From 2010 to 2014, he was a Postdoctoral Researcher with the ACCESS Linnaeus Centre, KTH Royal Institute of Technology, Stockholm, Sweden. From 2014 to 2018, he was with the Research School of Engineering, The Australian National University, Canberra, ACT, Australia, as a Lecturer/Senior Lecturer, and a Future Engineering Research Leadership Fellow. Since 2019, he has been with the Australian Centre for Robotics, University of Sydney, Camperdown, NSW, Australia. His research interests include distributed control systems, quantum networking and decisions, and social opinion dynamics.
\end{IEEEbiography}

\vskip -2\baselineskip plus -1fil

\begin{IEEEbiography}[
{\includegraphics[width=1in,height=1.25in,clip,keepaspectratio]{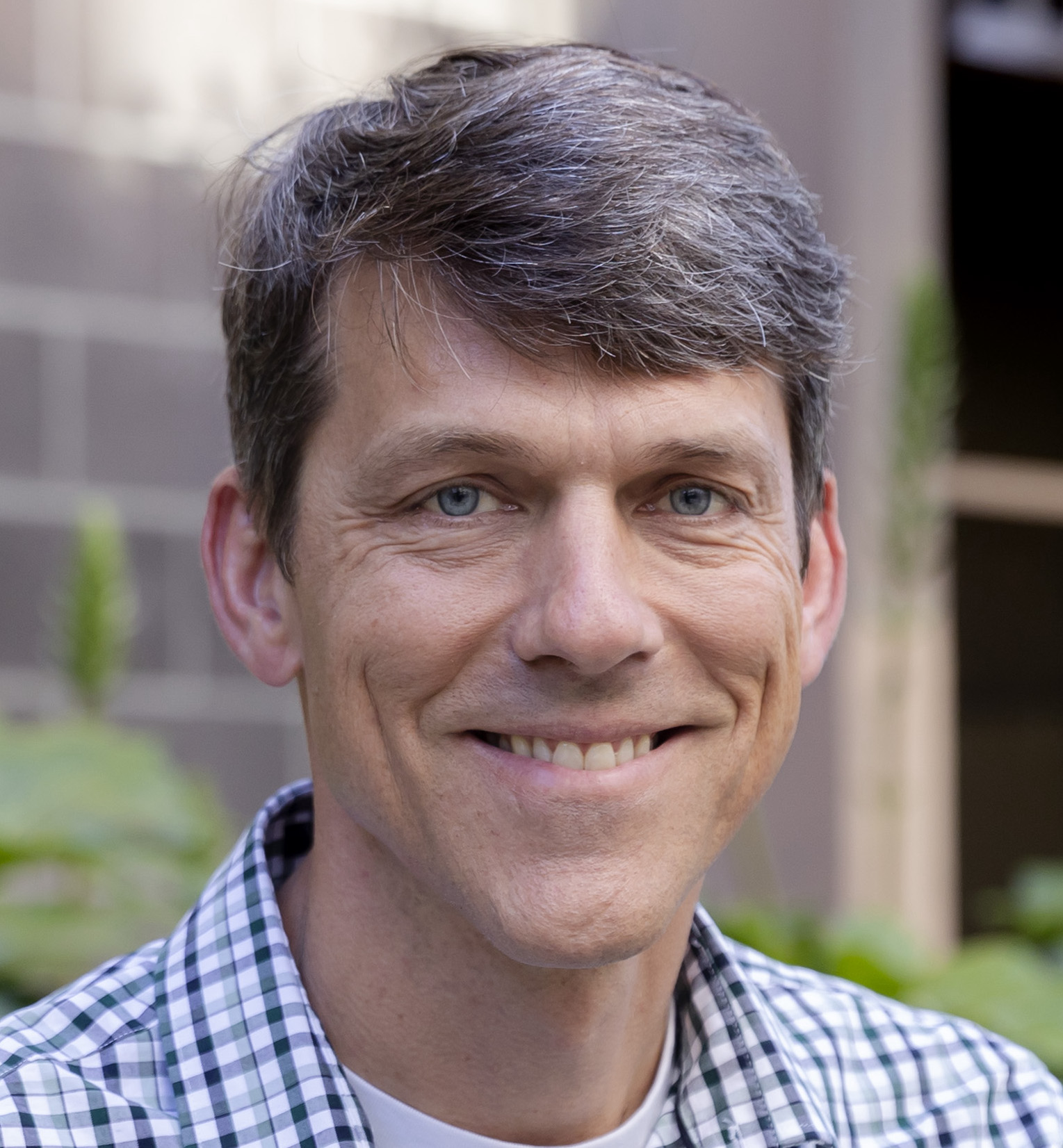}}]
{Duncan S. Callaway} received the
B.S. degree in mechanical engineering from the University of Rochester, Rochester, NY, USA, in 1995,
and the Ph.D. degree in theoretical and applied mechanics from Cornell University, Ithaca, NY, USA.
He is currently an Associate Professor of energy and
resources with University of California, Berkeley,
Berkeley, CA, USA. His current research interests
include control strategies for demand response, electric vehicles and energy storage; distribution network
management; and pathways to electrify and decarbonize the world’s low income regions.
\end{IEEEbiography}

\vskip -2\baselineskip plus -1fil

\begin{IEEEbiography}[
{\includegraphics[width=1in,height=1.25in,clip,keepaspectratio]{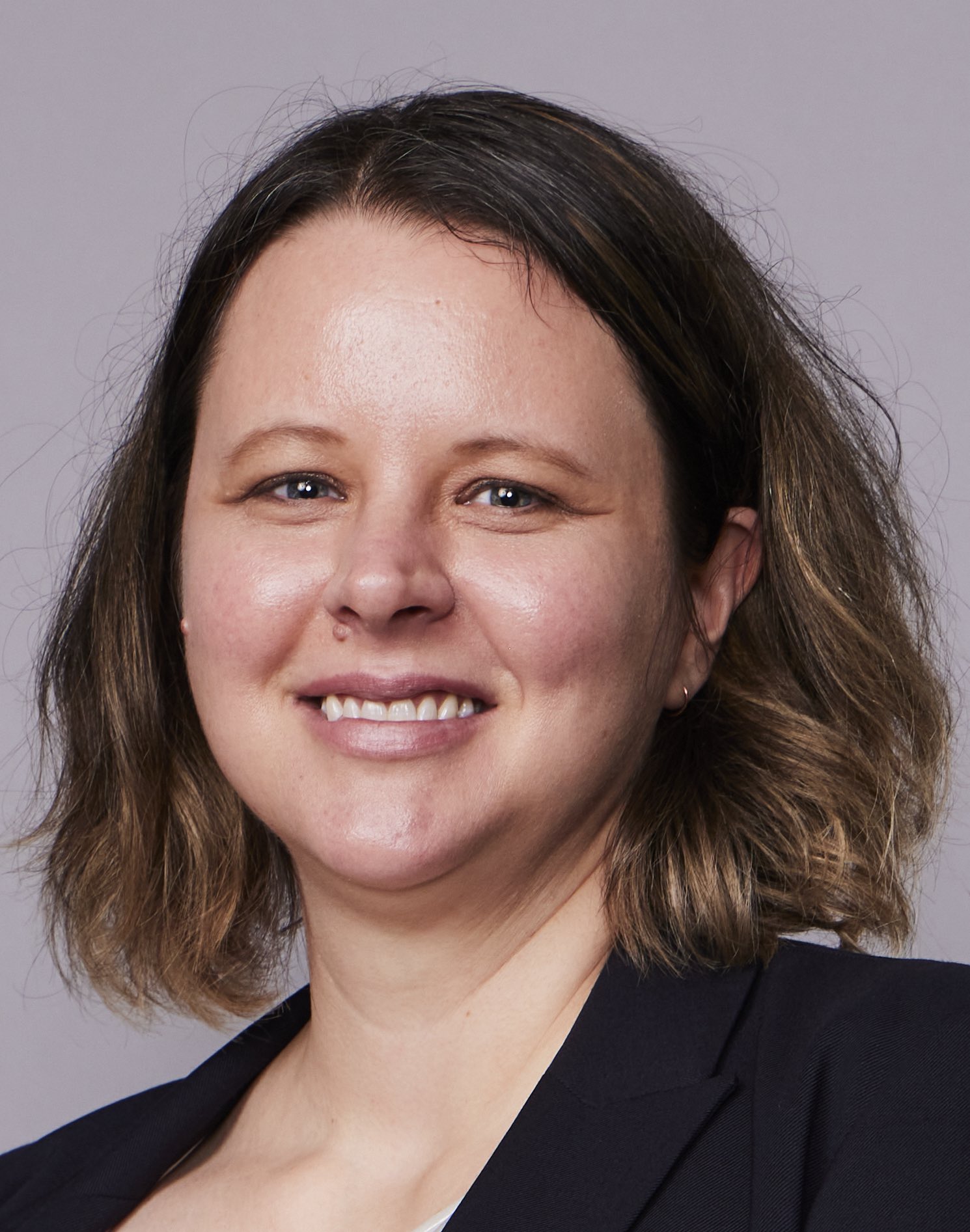}}]
{Elizabeth L. Ratnam} received the B.Eng. (Hons I) degree in electrical engineering and the Ph.D. degree in electrical engineering from the University of Newcastle, Newcastle, NSW, Australia, in 2006 and 2016, respectively. She subsequently held Postdoctoral research positions with the Center for Energy Research, University of California San Diego, San Diego, CA, USA, and the California Institute for Energy and Environment, University of California Berkeley, Berkeley, CA, USA. During 2001–2012, she held various positions with Ausgrid, NSW, Australia, a utility that operates one of the largest electricity distribution networks in Australia. From 2018-2024 she was a Senior Lecturer with the School of Engineering, Australian National University (ANU), Canberra, ACT, Australia, where she was the recipient of the Future Engineering Research Leader Fellowship. Since 2024, Dr. Ratnam is an Associate Professor in the Department of Electrical and Computer Systems Engineering at Monash University. She is a Fellow of Engineers Australia,  and her research interests include robust control of power systems, power system optimization and control, renewable energy grid integration, and smart grids.
\end{IEEEbiography}


\end{document}